\documentclass{article}

\usepackage{arxiv}
\usepackage[utf8]{inputenc}
\usepackage[T1]{fontenc}
\usepackage{amsmath}
\usepackage{amssymb}
\usepackage{amsthm}
\usepackage[hypertexnames=false]{hyperref}
\usepackage{url}
\usepackage{booktabs}
\usepackage{microtype}
\usepackage{graphicx}
\usepackage{natbib}
\usepackage{doi}
\usepackage{algorithm}
\usepackage{algpseudocode}
\usepackage{tikz}
\usepackage{dsfont}

\usetikzlibrary{positioning, arrows.meta, decorations.pathreplacing}

\newcommand{\ind}{\mathord{\mathds{1}}}
\newcommand{\orcidicon}[1]{\href{https://orcid.org/#1}{\includegraphics[scale=0.06]{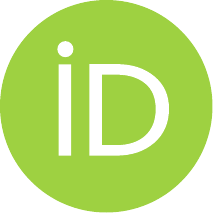}\hspace{1mm}}}

\newtheorem{theorem}{Theorem}
\newtheorem{proposition}{Proposition}
\newtheorem{lemma}{Lemma}
\newtheorem{corollary}{Corollary}
\newtheorem{definition}{Definition}
\newtheorem{example}{Example}
\newtheorem{remark}{Remark}

\title{Closed-Form Predicate-Level Shapley Attribution for Sliding-Window Aggregates}
\date{}
\author{%
  \orcidicon{0000-0002-4786-0572}Pouya Khani \\
  Department of Computer Science\\
  Aarhus University\\
  Denmark \\
  \texttt{pouya.khani@cs.au.dk} \\
  \And
  \orcidicon{0000-0002-1091-9948}Ira Assent \\
  Department of Computer Science\\
  Aarhus University\\
  Denmark \\
  \texttt{ira@cs.au.dk} \\
}

\renewcommand{\shorttitle}{Closed-Form Predicate-Level Shapley Attribution}

\hypersetup{
  pdftitle={Closed-Form Predicate-Level Shapley Attribution for Sliding-Window Aggregates},
  pdfsubject={Computer Science - Databases},
  pdfauthor={Pouya Khani and Ira Assent},
  pdfkeywords={Explainable AI, Data Streaming, Shapley Values, Predicate Attribution, Sliding Windows}
}

\begin{document}
\setlength{\emergencystretch}{2em}
\maketitle

\begin{abstract}
Streaming engines report sliding-window aggregates in real time, but they do not explain \emph{why} an aggregate takes its current value. A natural target is the Shapley value from cooperative game theory, which axiomatically distributes an aggregate among the tuples in the window. Practitioners, however, ask predicate-level questions (e.g., how much a region or customer tier contributed to an average or variance spike). Exact Shapley computation is exponential in the window size, and existing estimators discard the massive overlap between consecutive windows.

We show that for SUM, COUNT, AVG, and population/sample variance, exact predicate-level Shapley values admit closed forms in three additively maintained summaries per predicate (count, sum, and sum of squares), with coefficients that depend only on two running harmonic numbers. Attribution therefore reduces to $O(1)$ summary updates per slide for registered predicates, with no coalition enumeration. Overlapping and compositional predicates are answered exactly via atomic refinement of Boolean signatures. We further characterize the phenomenon: every moment-polynomial aggregate admits such a form, while MAX, MIN, and quantiles provably do not at any fixed moment order.

Experiments match brute-force Shapley values to floating-point precision on over $10{,}000$ windows, sustain $\approx\!2\,\mu$s per slide up to $N=10^6$ ($3{,}200\times$ faster than per-window recomputation of the same formulas), and explain a nighttime fare spike on 2.9M NYC taxi trips at $\approx\!1.8$M summary updates per second.
\end{abstract}

\keywords{Explainable AI, Data Streaming, Shapley Values, Predicate Attribution, Sliding Windows}

\section{Introduction}

High-velocity data streams, from algorithmic trading and IoT monitoring to real-time cybersecurity, have made Data Stream Management Systems (DSMS) a standard tool. Modern engines such as Apache Flink and Spark Streaming compute continuous aggregate queries over sliding windows efficiently. These systems can instantly report \textit{what} is happening (e.g., ``the average server latency has spiked by 200\,ms''), but they cannot explain \textit{why} it is happening in real time.

Standard root-cause analysis uses ad-hoc manual queries or post-hoc analysis on data warehouses. These approaches introduce unacceptable latency. To automate this, we turn to the Shapley Value (SV), a concept from cooperative game theory. Unlike heuristic feature importance methods, the Shapley Value distributes the ``payout'' (the aggregate result) among the ``players'' (the input tuples) axiomatically. It satisfies properties such as Efficiency, Symmetry, Additivity, and the Dummy axiom.

\begin{example}[Running Example: Cloud Latency Monitoring]
Consider an operations team monitoring request latency for a cloud service. A streaming engine continuously reports AVG and VAR over the most recent requests. When the average latency spikes, the team asks whether the increase is mainly due to requests from \texttt{region='EU-West'}, premium customers, a particular service endpoint, or an intersection such as \texttt{region='EU-West' AND tier='premium'}. The relevant explanation must be produced while the window is still active. It should also distinguish a predicate that increases the average from one that merely contributes many near-average tuples. We use this latency-monitoring scenario throughout the paper. Small numerical examples use normalized latency values such as $W=\{2,4,10\}$ to keep the arithmetic readable.
\end{example}

\subsection{The Challenges}
Despite its theoretical superiority, applying Shapley Values to streaming aggregates presents three prohibitive challenges:

\textbf{1. The Re-computation Bottleneck:} For a generic cooperative game, exact Shapley computation requires summing over exponentially many coalitions. Sampling-based approximations still require many marginal evaluations. In a sliding-window context, where $W_t$ denotes the active window at time $t$ and $W_{t+1}$ the next window, the two window states largely overlap. Stateless methods such as Monte Carlo permutation sampling discard useful work from the previous window and repeat many nearly identical computations.

\textbf{2. The Granularity Mismatch (Tuple vs. Predicate):} Existing literature mainly focuses on the Shapley value of individual atomic tuples (row-level attribution). However, in streaming analytics, users rarely ask about a specific raw event. Instead, they require \textit{Predicate-Based Attribution}. A user is more likely to ask, ``How much did \texttt{Region='EU-West'} contribute to the variance?'' than to query a specific transaction ID. A predicate $\mathcal{P}$ is a Boolean condition over tuple attributes. Computing attribution for $\mathcal{P}$ treats the dynamic subset of tuples that satisfy $\mathcal{P}$ as a coalition. In a streaming context, the cardinality of this subset changes at every slide, complicating standard group-testing approaches. Users also frequently issue \emph{compositional} questions that involve conjunctions, disjunctions, and negations of multiple predicates. An attribution scheme that only supports single-predicate queries is therefore insufficient.

\textbf{3. Non-Linearity in Aggregates:} Linear aggregates (SUM, COUNT) are trivially decomposable. Non-linear functions such as AVG and VAR introduce interaction effects: the marginal contribution of one tuple depends on the other tuples present in the coalition. The challenge is to show that these interactions can still be summarized exactly for important SQL aggregates whose Shapley game admits a closed form in low-order moments, rather than approximating the full Shapley game.

\subsection{Contributions}
We derive exact closed-form predicate-level Shapley formulas for sliding-window aggregates whose cooperative game depends only on low-order moments: SUM, COUNT, AVG, population variance, and sample variance. The formulas need only the window's size and its first and second moments, all of which are easy to keep up to date. Once the summaries are updated for tuples entering and leaving the window, attribution is a single formula evaluation. No tuple-level Shapley computation and no coalition enumeration are needed.

\textbf{Notation preview.} We use $W_t$ for the active window at time $t$ and $W$ for a generic active window when the time index is not important. We use $\mathcal{P}$ for a user predicate and $C_{\mathcal{P}}$ for the active tuples satisfying $\mathcal{P}$. The predicate summaries are $m_{\mathcal{P}}=|C_{\mathcal{P}}|$, $A_{\mathcal{P}}=\sum_{x_i\in C_{\mathcal{P}}}y_i$, and $B_{\mathcal{P}}=\sum_{x_i\in C_{\mathcal{P}}}y_i^2$, where $y_i$ is the numeric attribute being aggregated. We write $\Delta=(x_{out},x_{in})$ for a single-tuple count-window slide and $\Delta_t=(X_t^{out},X_t^{in})$ for a time-window slide with outgoing and incoming tuple sets. Full notation is summarized in Table~\ref{tab:notation}.

\begin{itemize}
    \item \textbf{Exact Sufficient-Statistics Shapley Formulas:} We derive exact closed-form Shapley expressions for SUM, COUNT, AVG, population variance, and sample variance over a finite window. We show that predicate-level attribution for each is a linear function of compact predicate summaries. For a single-relation static database, the tuple-level AVG formula matches an independent result of Standke and Kimelfeld~\cite{standke2025tractability} (Section~\ref{sec:avg}). The variance formulas, the predicate-level forms, and their windowed maintenance are new. Every attribution coefficient is a simple expression in two harmonic numbers and can be updated in $\mathcal{O}(1)$ as the window size changes (Section~\ref{sec:variance}).
    \item \textbf{Predicate Attribution without Tuple Materialization:} We show that predicate questions can be answered exactly from $(m_{\mathcal{P}}, A_{\mathcal{P}}, B_{\mathcal{P}})$ together with global summaries, without computing or storing any tuple-level Shapley value.
    \item \textbf{Sliding-Window Summary Maintenance:} We show how each predicate summary changes under a slide, both for single-tuple count-based deltas $\Delta=(x_{out},x_{in})$ and for multi-tuple time-based deltas $\Delta_t=(X^{out}_t,X^{in}_t)$, avoiding re-evaluation of the full cooperative game on consecutive windows.
    \item \textbf{Time-Based Window Generalization:} We extend the closed forms to time-based windows where the active cardinality $N_t$ varies, with explicit batched delta-update rules; the harmonic-number coefficients follow $N_t$ at $\mathcal{O}(1)$ cost.
    \item \textbf{Overlapping and Compositional Predicates:} We support overlapping predicates and arbitrary Boolean combinations of them (intersections, unions, complements, and set differences). We do so via the \emph{atomic refinement} of their Boolean signatures. Maintaining sufficient statistics per atom yields exact attribution for any compositional query and restores Shapley efficiency over the resulting window partition.
    \item \textbf{Additivity Across Disjoint Subsets:} The required sufficient statistics are additive across disjoint subsets, so the closed forms compose cleanly with existing pane-based sliding-window aggregation frameworks~\cite{li2005pane,tangwongsan2015general,traub2021scotty}.
    \item \textbf{Boundary of the Phenomenon:} We prove that the approach extends to every \emph{moment-polynomial} window game, i.e., any game that is a polynomial in the coalition's power sums with size-dependent coefficients (Section~\ref{sec:boundary}). We also prove the converse boundary: MAX, MIN, and fixed order statistics have \emph{no} sufficient-statistic representation of any fixed moment order, because windows can match all moments up to order $d$ yet differ in their extreme values.
    \item \textbf{Semantics Made Explicit:} We show that AVG attribution splits exactly into a \emph{group share} plus a harmonically weighted \emph{group lift} (Section~\ref{sec:scale}). This reveals that attributions grow logarithmically with window size. On real data, we measure how often the Shapley weighting changes the explanation compared to the naive lift heuristic, and how far the sum-of-members semantics is from the predicate-as-player alternative (Sections~\ref{sec:eval:lift} and~\ref{sec:eval:casestudy}).
    \item \textbf{Experimental Evaluation:} We validate exactness against brute-force enumeration on $10{,}010$ random windows. We measure constant per-slide latency up to $N=10^6$, comparing against per-window recomputation, permutation sampling~\cite{castro2009polynomial}, and exact enumeration. A case study on 2.9 million NYC taxi trips~\cite{nyctlc2024} explains a nighttime fare spike as long-haul airport traffic while sustaining $\approx\!1.8$ million summary updates per second in a single-threaded prototype (Section~\ref{sec:eval}).
\end{itemize}

\section{Related Work}

The problem of explaining aggregate query results in data streams intersects three research communities: Cooperative Game Theory, Machine Learning (Data Valuation), and Database Provenance. We review the state of the art in each and analyze their limitations regarding the sliding-window predicate-attribution problem.

\subsection{Stateless Shapley Approximations}
The foundational work by Shapley \cite{shapley1953value} established the axioms of Efficiency, Symmetry, and Additivity. For a generic game represented by a value oracle, exact computation requires summing over exponentially many coalitions. To address this, Castro et al.\ \cite{castro2009polynomial} introduced sampling-based estimation using the theory of permutations. This approach was unified and popularized in the Machine Learning community by \textbf{SHAP} (Shapley Additive Explanations) \cite{lundberg2017unified}. KernelSHAP, in particular, approximates values by solving a weighted linear regression on perturbed inputs.

\textbf{Limitation:} These methods are designed for static datasets and do not exploit aggregate-specific structure. In a streaming context, a sliding-window transition $W_t \to W_{t+1}$ involves a high degree of data overlap (typically $>99\%$). Stateless algorithms such as KernelSHAP treat $W_{t+1}$ as a completely new dataset, discarding the computational effort spent on $W_t$. They also operate at the tuple level, so even if applied to a stream they would not directly yield predicate-level answers.

\subsection{Variance-Based Sensitivity and Shapley Effects}
Variance attribution has also been studied in global sensitivity analysis through Sobol' indices and Shapley effects \cite{owen2014sobol,song2016shapley}. That line of work allocates the variance of a model output among input \emph{variables}, typically under a probabilistic input model. The structural connection is that both formulations exploit the fact that variance functionals depend on the data only through low-order moments. Our finite-window expressions can be viewed as the discrete, sampling-without-replacement counterpart of the continuous Shapley-effects framework.

\textbf{Limitation:} The players, the object of explanation, and the query type differ from ours. Sobol'/Shapley-effects allocate variance among input \emph{variables} of a probabilistic model. In contrast, we attribute a database aggregate over a realized finite window to \emph{tuples} and \emph{predicates}. Those methods therefore do not yield sliding-window predicate attributions for SQL-style aggregates.

\subsection{Incremental Shapley and XAI on Streams}
Recent work has adapted Shapley-style explanations to streaming or temporal settings. \textbf{DeltaSHAP} \cite{kim2025deltashap} explains \textit{prediction evolution} in online ML (e.g., patient monitoring). It attributes \textit{changes in model predictions} over time using Shapley-style reasoning. \textbf{iSAGE} \cite{muschalik2023isage} and \textbf{iPFI} \cite{fumagalli2023ipfi} provide incremental \textit{feature} importance on data streams and handle concept drift.

\textbf{Limitation:} These methods explain model outputs (e.g., risk scores) or which \textit{features} drive predictions. They do not explain the result of a streaming \textit{aggregate query} (SUM, AVG, VAR over windows) attributed to \textit{predicates} (e.g., region, sector). Their maintenance is driven by changes in model predictions, whereas our closed forms are driven by the window delta.

\subsection{Incremental Shapley and Data Valuation}
Recent work in ``Data Valuation'' seeks to quantify the value of training data points for ML models. Ghorbani and Zou \cite{ghorbani2019data} introduced the Data Shapley to identify low-quality data. To avoid expensive model retraining, Jia et al.\ \cite{jia2019efficient} proposed efficient incremental algorithms for K-Nearest Neighbors (\textit{KNN-Shapley}). They later extended this to heuristic approximations for deep learning updates.

\textbf{Limitation:} These methods largely assume an \textit{append-only} or \textit{batch-update} setting suitable for model retraining pipelines. They handle the insertion of a new tuple ($+x_{new}$). They are not designed around the \textit{eviction} phase of a sliding window ($-x_{old}$). Some incremental valuation methods also maintain model- or history-dependent state \cite{yan2021monitoring}. Our closed forms instead target aggregate-query windows, where simultaneous insertion and eviction are absorbed by additive updates to bounded predicate summaries.

\subsection{Explanations in Data Management}
The database community has explored query explanations through provenance, causality, responsibility, group-by-style explanations, and sensitivity analysis. \textit{Scorpion} \cite{wu2013scorpion} uses backward tracing and influence-style reasoning to identify tuples responsible for outliers in aggregate queries. MacroBase \cite{bailis2017macrobase} prioritizes attention in fast data by highlighting important and unusual behavior in streams. DIFF \cite{abuzaid2021diff} provides a relational interface for large-scale data explanation. \textit{TSExplain} \cite{chen2023tsexplain} segments aggregated time series so that each segment admits a stable set of top contributors. It targets evolving contributor explanations for KPI curves rather than axiomatic Shapley attribution of a window aggregate to logical predicates. \textit{Erebus} \cite{palyvos2022erebus} explains divergences between expected and observed streaming query answers (including missing outputs) using provenance-style reasoning under storage limits. It complements forward why-provenance. Glavic et al.\ \cite{glavic2013ariadne} instrument stream operators to propagate fine-grained tuple lineage in DSMSs, trading off eager versus lazy provenance for overhead. Roy and Suciu \cite{roy2014formal} formalize explanations for database queries through intervention-style changes. Meliou et al.\ \cite{meliou2010causality} formalized causality and responsibility for query answers and non-answers. More recent work studies Shapley values directly in database query answering. Livshits et al.\ \cite{livshits2020shapley} initiated the complexity study of Shapley contributions of tuples to query answers. They established a dichotomy for conjunctive queries and first results for aggregate queries over static databases. Subsequent work develops practical computation via knowledge compilation and model counting \cite{deutch2022computing,kara2024shapley,bertossi2023shapleydb}, tractable alternative responsibility measures \cite{bienvenu2025shapley}, lineage-based algorithms for aggregate queries \cite{abramovich2025advancing}, and a tractability map for aggregate conjunctive queries \cite{standke2025tractability}. Closest to our closed forms, Standke and Kimelfeld \cite{standke2025tractability} derive the tuple-level Shapley value of AVG for a single-relation query over a static database (their Proposition~5.2). This coincides with our Eq.~\ref{eq:avg_tuple}; see Section~\ref{sec:avg}. That line of work is tuple-level and static. It addresses neither predicate-level attribution, nor variance games, nor maintenance under sliding windows.

\textbf{Limitation:} These works provide important foundations for attribution in databases. They do not directly address the sliding-window predicate-attribution problem studied here. In prior work we studied predicate-level attribution on static tables under partial causal knowledge via Causal Banzhaf Value (CBV)~\cite{khani2025causal} and Causally-Constrained Power Indices (CCPI)~\cite{khani2026ccpi}. Those methods constrain static power-index explanations; they do not provide closed-form streaming attribution. Causality and sensitivity methods are not generally Shapley-axiomatic. Shapley-based query-answer explanations are formulated for static database instances rather than continuously changing windows. Their complexity results concern worst-case query structure rather than the moment structure of aggregate games that we exploit for constant-time maintenance. Our work targets predicate-level Shapley attribution for streaming aggregate queries. It exploits overlap between consecutive windows in systems such as Apache Flink \cite{carbone2015apache}.

\subsection{Sliding-Window Aggregates, Intermittency, and Shared Maintenance}
General-purpose window aggregation frameworks such as panes \cite{li2005pane}, the Reactive Aggregator \cite{tangwongsan2015general}, and Scotty \cite{traub2021scotty} maintain associative aggregates efficiently under arbitrary window specifications. Our sufficient statistics are associative and slot directly into these frameworks (Corollary~\ref{cor:panes}). Shein and Chrysanthis \cite{shein2022multi} optimize multi-query plans for incrementally evaluated sliding-window aggregates, sharing work across concurrent window specifications. Zhu and Ravishankar \cite{zhu2004intermittent} approximate aggregate answers when streams are \emph{intermittent}, using randomized summaries under burstiness assumptions.

\textbf{Limitation:} These systems target scalable \emph{evaluation} of window aggregates, not axiomatic \emph{explanation} of the result. They do not produce Shapley attributions to predicates. Intermittent-stream methods further assume incomplete observations and approximate answers. Our exact formulas require a fully observed active window. Sketch-based or intermittent settings are left to the approximate extensions discussed in our limitations.

\section{Background and Preliminaries}

In this section, we formalize the data stream model, the semantics of sliding-window aggregation, and the game-theoretic foundations of the Shapley Value. We then extend Shapley attribution from atomic players to logical predicates.

\subsection{Data Stream and Sliding Windows}
Let $\mathcal{S} = \{x_1, x_2, \dots \}$ be an unbounded stream of tuples. Each tuple $x_i \in \mathcal{D}$ consists of a set of features and a timestamp $\tau_i$. We consider two window models.

\paragraph{Count-based windows.} A count-based window of size $N$ contains the $N$ most recent tuples:
\begin{equation}
    W_t = \{x_{t-N+1}, \dots, x_t\}, \qquad |W_t|=N.
\end{equation}
The transition from $W_t$ to $W_{t+1}$ is defined by a delta pair $\Delta = (x_{out}, x_{in})$. Here $x_{out} = x_{t-N+1}$ is evicted and $x_{in} = x_{t+1}$ is inserted. Cardinality is fixed.

\paragraph{Time-based windows.} A time-based window of duration $\tau$ contains all tuples whose timestamp lies in the trailing interval $(t-\tau,t]$:
\begin{equation}
    W_t = \{x_i \in \mathcal{S} \mid \tau_i \in (t-\tau,t]\}, \qquad |W_t|=N_t.
\end{equation}
The active cardinality $N_t$ varies with the arrival process. The transition from $W_t$ to $W_{t'}$ at the next emit instant $t'$ is described by a multi-tuple delta
\begin{equation}
    \Delta_{t,t'} = (X^{out}_{t,t'},\,X^{in}_{t,t'}),
\end{equation}
where $X^{in}_{t,t'} = \{x_i : \tau_i \in (t,t']\}$ and $X^{out}_{t,t'} = \{x_i : \tau_i \in (t-\tau,t'-\tau]\}$. We write $|\Delta_{t,t'}| = |X^{in}_{t,t'}|+|X^{out}_{t,t'}|$. We treat both window types uniformly when possible. We write $W$ for an unspecified window of cardinality $N$ (or $N_t$ for time-based contexts where the dependence matters). Figure~\ref{fig:window_delta} illustrates a single-tuple count-based slide. The time-based case generalizes both arrows to sets.

\begin{figure}
\centering
\begin{tikzpicture}[
    >=Stealth,
    cell/.style={draw, minimum width=0.55cm, minimum height=0.55cm, font=\scriptsize, inner sep=1pt},
    lbl/.style={font=\scriptsize},
    arr/.style={->, thick}
]
\node[lbl, anchor=east] at (-0.15, 1.5) {$W_t$:};
\node[cell, fill=red!15] (a1) at (0.275, 1.5) {$x_3$};
\node[cell, fill=yellow!12, right=0pt of a1] (a2) {$x_4$};
\node[cell, fill=yellow!12, right=0pt of a2] (a3) {$x_5$};
\node[cell, fill=yellow!12, right=0pt of a3] (a4) {$x_6$};
\node[cell, fill=yellow!12, right=0pt of a4] (a5) {$x_7$};
\node[cell, fill=yellow!12, right=0pt of a5] (a6) {$x_8$};

\draw[->, red!70!black, thick, dashed] ([yshift=-0.1cm]a1.south) -- ++(0, -0.4)
    node[below, font=\scriptsize, text=red!70!black] {$x_{out}=x_3$};

\node[font=\Large] at (3.6, 0.75) {$\Downarrow$};
\node[lbl] at (4.7, 0.75) {$\Delta=(x_3, x_9)$};

\node[lbl, anchor=east] at (-0.15, 0) {$W_{t+1}$:};
\node[cell, fill=yellow!12] (b1) at (0.275, 0) {$x_4$};
\node[cell, fill=yellow!12, right=0pt of b1] (b2) {$x_5$};
\node[cell, fill=yellow!12, right=0pt of b2] (b3) {$x_6$};
\node[cell, fill=yellow!12, right=0pt of b3] (b4) {$x_7$};
\node[cell, fill=yellow!12, right=0pt of b4] (b5) {$x_8$};
\node[cell, fill=green!18, right=0pt of b5] (b6) {$x_9$};

\draw[->, green!50!black, thick, dashed] ([yshift=0.1cm]b6.north) -- ++(0, 0.4)
    node[above, font=\scriptsize, text=green!50!black] {$x_{in}=x_9$};

\draw[decorate, decoration={brace, amplitude=4pt, mirror}]
    ([yshift=-3pt]b1.south west) -- ([yshift=-3pt]b5.south east)
    node[midway, below=5pt, font=\scriptsize] {$N{-}1$ shared tuples};
\end{tikzpicture}
\caption{Single-tuple count-based slide. The window $W_t$ advances by evicting $x_{out}$ and inserting $x_{in}$. Of $N$ tuples, $N{-}1$ are shared between consecutive windows. The closed forms below therefore need only update the summaries affected by the two boundary tuples. For time-based windows, $x_{out}$ and $x_{in}$ generalize to sets $X^{out}_t$ and $X^{in}_t$ of arbitrary size.}
\label{fig:window_delta}
\end{figure}

\subsection{The Shapley Value}
The Shapley Value is a solution concept from cooperative game theory. It assigns a unique distribution of the total surplus generated by a coalition of players. In the context of data attribution, the players are the tuples in the window $W$, and the payout is the aggregate result. Formally, let $\nu:2^{W}\to\mathbb{R}$ denote the cooperative game induced by the aggregate. For any coalition $S\subseteq W$, $\nu(S)$ is the value of the aggregate computed on the tuples in $S$ (with the convention $\nu(\emptyset)=0$ unless stated otherwise). Concrete instances include $\nu_{\mathrm{SUM}}$, $\nu_{\mathrm{AVG}}$, and $\nu_{\mathrm{VAR}}$ defined in Section~\ref{sec:closedforms}.

For a tuple $x_i \in W$, the Shapley value $\phi_i(\nu, W)$ is the weighted average of its marginal contributions to all possible subsets $S \subseteq W \setminus \{x_i\}$:
\begin{equation}
    \phi_i(\nu, W) = \sum_{S \subseteq W \setminus \{x_i\}} \frac{|S|! (|W| - |S| - 1)!}{|W|!} [\nu(S \cup \{x_i\}) - \nu(S)].
    \label{eq:shapley_def}
\end{equation}
This formulation satisfies the standard axioms:
\begin{itemize}
    \item \textbf{Efficiency:} $\sum_{i \in W} \phi_i = \nu(W) - \nu(\emptyset)$.
    \item \textbf{Symmetry:} If two tuples $x_i, x_j$ contribute equally to all coalitions, then $\phi_i = \phi_j$.
    \item \textbf{Additivity:} For two games $\nu_1, \nu_2$, $\phi_i(\nu_1 + \nu_2) = \phi_i(\nu_1) + \phi_i(\nu_2)$.
    \item \textbf{Dummy:} If a tuple has zero marginal contribution to every coalition, then $\phi_i = 0$.
\end{itemize}

\subsection{Predicate-Based Attribution}
The classical Shapley literature focuses on the contribution of atomic tuples. Streaming analytics often requires explanations at the level of logical predicates (e.g., ``\texttt{sector = technology}''). Let $\mathcal{P}: \mathcal{D} \to \{0, 1\}$ be a boolean predicate. We define the \textit{Predicate Coalition} $C_{\mathcal{P}} \subseteq W$ as the set of all active tuples satisfying the predicate:
\begin{equation}
    C_{\mathcal{P}} = \{x \in W \mid \mathcal{P}(x) = 1\}.
\end{equation}
The attribution for the predicate, denoted $\Phi_{\mathcal{P}}$, is the sum of the atomic Shapley values of its constituents under the same aggregate game $\nu$:
\begin{equation}
    \Phi_{\mathcal{P}}(W) = \sum_{x_i \in C_{\mathcal{P}}} \phi_i(\nu, W).
    \label{eq:predicate_def}
\end{equation}

\textbf{Partition, overlap, and atomic refinement.}
By the Efficiency axiom, if a family $\{\mathcal{P}_1,\dots,\mathcal{P}_K\}$ partitions the window, then $\sum_{k} \Phi_{\mathcal{P}_k}(W) = \nu(W) - \nu(\emptyset)$. That is, the predicate attributions form an exhaustive, mutually exclusive decomposition of the aggregate. For a single predicate $\mathcal{P}$ together with its complement $\neg\mathcal{P}$, this reduces to $\Phi_{\mathcal{P}} + \Phi_{\neg\mathcal{P}} = \nu(W)-\nu(\emptyset)$. For \emph{overlapping} predicates, the per-predicate quantity $\Phi_{\mathcal{P}}$ remains the exact aggregate Shapley contribution of $C_{\mathcal{P}}$. The family $\{\Phi_{\mathcal{P}_k}\}$ no longer sums to $\nu(W)-\nu(\emptyset)$, because tuples in multiple predicates are double-counted. We address overlap in Section~\ref{sec:overlap} by working at the level of the \emph{atomic refinement} of the predicates.

\textbf{Grand-coalition choice.} Throughout the paper, the grand coalition is the active window $W_t$. Attribution therefore explains the current aggregate relative to the tuples currently retained by the window. Other choices (e.g., all tuples seen so far, or a larger historical horizon) define different cooperative games and yield different Shapley values.

\textbf{The Challenge.} In a sliding window, the composition of $C_{\mathcal{P}}$ is dynamic. Both the player set $W$ and the subset $C_{\mathcal{P}}$ change at every slide. Naively recomputing Eq.~\ref{eq:predicate_def} requires evaluating Eq.~\ref{eq:shapley_def} for every member of the changing group. This is computationally prohibitive. The goal of the present work is to identify aggregate classes for which $\Phi_{\mathcal{P}}$ can be expressed in closed form from small, additively-maintained summaries.

\subsection{Aggregate Classification}
The complexity of computing $\phi_i$ depends on the properties of $\nu$. We classify streaming aggregates into two categories:
\begin{itemize}
    \item \textbf{Linear/Decomposable (e.g., SUM, COUNT):} The marginal contribution is independent of the coalition $S$.
    \item \textbf{Sufficient-Statistic Non-Linear (e.g., AVG, VAR):} The marginal contribution depends on the tuples in $S$. However, the expectation over uniformly random predecessor coalitions can be expressed using low-order sufficient statistics of $W$.
\end{itemize}
Our results address both decomposable aggregates and non-linear aggregates whose Shapley expressions reduce to incrementally maintained sufficient statistics. They do \emph{not} cover aggregates such as quantiles, MIN/MAX, or arbitrary user-defined functions. Section~\ref{sec:boundary} makes both sides of this boundary precise.

\begin{table}[t]
\scriptsize
\setlength{\tabcolsep}{2pt}
\renewcommand{\arraystretch}{1.15}
\centering
\begin{tabular}{|p{0.30\linewidth}|p{0.58\linewidth}|}
\hline
\textbf{Symbol} & \textbf{Meaning} \\ \hline
$W_t,W,N,N_t$ & Active window, generic window, fixed count-window size, and active time-window size \\ \hline
$x_i,y_i$ & Tuple $i$ and its numeric value for SUM/AVG/VAR \\ \hline
$\Delta,\Delta_t$ & Single-tuple count-window slide and multi-tuple time-window slide \\ \hline
$\nu,\phi_i$ & Aggregate game and tuple-level Shapley value \\ \hline
$\mathcal{P},C_{\mathcal{P}},\Phi_{\mathcal{P}}$ & Predicate, its active coalition, and predicate attribution \\ \hline
$A(W),B(W)$ & Global sum and squared sum over $W$ \\ \hline
$m_{\mathcal{P}},A_{\mathcal{P}},B_{\mathcal{P}}$ & Predicate count, sum, and squared sum \\ \hline
$H_N,H_N^{(2)}$ & First- and second-order harmonic numbers \\ \hline
$T_i,U_i,E_i^{(2)}(r)$ & Auxiliary per-tuple variance quantities defined in Section~\ref{sec:variance} \\ \hline
$U_{\mathcal{P}},YT_{\mathcal{P}},Q_{\mathcal{P}}$ & Predicate-level variance combinations defined in Eq.~\ref{eq:UP}--\ref{eq:QP} \\ \hline
$C_U,C_Q,C_{YT},C_B$ & Population-variance coefficients depending only on $N$ \\ \hline
$C_U^S,C_Q^S,C_{YT}^S,C_B^S$ & Sample-variance coefficients depending only on $N$ \\ \hline
$\mathrm{sig}(x),\alpha_s$ & Boolean signature of tuple $x$ and corresponding atom \\ \hline
\end{tabular}
\caption{Frequently used notation. Symbols used only locally are defined where they appear.}
\label{tab:notation}
\end{table}

\section{Closed-Form Predicate-Level Shapley Attribution}
\label{sec:closedforms}

We now derive exact closed-form predicate-level Shapley attribution for the aggregate classes above. Let the transition from $W_t$ to $W_{t+1}$ be defined by the delta $\Delta_t$. For count-based windows, $\Delta_t$ is the singleton $(x_{out},x_{in})$. For time-based windows, it is the set pair $(X^{out}_t,X^{in}_t)$. Throughout this section we focus on a single predicate $\mathcal{P}$. The multi-predicate and compositional case is treated in Section~\ref{sec:overlap}.

\subsection{Attribution Semantics}
\label{sec:semantics}
The word ``delta'' in our framing refers to the \emph{maintenance} delta of the sliding window. This delta consists of tuples leaving the active state while others enter. The attribution \emph{semantics}, by contrast, concern the current-window quantity $\Phi_{\mathcal{P}}(W_t)$. We do not introduce a separate Shapley game over the change $\nu(W_{t+1})-\nu(W_t)$. Defining such a ``delta game'' would require an additional modeling choice, because $W_t$ and $W_{t+1}$ have different player sets. A semantic theory of change attribution is left open (Section~\ref{sec:limitations}).

\subsection{Decomposable Aggregates: SUM and COUNT}
For SUM over a numeric attribute $y$, the cooperative game is $\nu(S)=\sum_{x_i \in S} y_i$. The marginal contribution of tuple $x_i$ is always $y_i$, independent of the predecessor coalition. Hence $\phi_i=y_i$. Predicate attribution is simply
\begin{equation}
    \Phi_{\mathcal{P}}^{SUM}(W) = \sum_{x_i \in W} \ind_{\mathcal{P}}(x_i)\,y_i = A_{\mathcal{P}}.
\end{equation}
COUNT is analogous. Each tuple contributes one unit to the count game, so the attribution of a predicate is the number of active tuples satisfying it: $\Phi_{\mathcal{P}}^{COUNT}(W) = m_{\mathcal{P}}$. Both SUM and COUNT are maintained exactly by incrementing the predicate summary for every tuple in $X^{in}_t$ and decrementing it for every tuple in $X^{out}_t$.

\subsection{Average}
\label{sec:avg}
AVG is non-additive but admits an exact Shapley expression in terms of simple sufficient statistics. Let $y_i$ denote the value of tuple $x_i$, let $A(W)=\sum_{x_i \in W} y_i$, and let $H_N=\sum_{r=1}^{N}1/r$ be the $N$-th harmonic number. For the game
\begin{equation}
    \nu_{AVG}(S)=
    \begin{cases}
        \frac{1}{|S|}\sum_{x_i \in S} y_i, & |S|>0,\\
        0, & |S|=0,
    \end{cases}
\end{equation}
the Shapley value of tuple $x_i$ for $N\geq 1$ (using the convention $H_1-1=0$) is
\begin{equation}
    \phi_i^{AVG}(W)
    =
    \frac{1}{N}
    \left[
        y_i +
        \frac{H_N-1}{N-1}
        \left(Ny_i - A(W)\right)
    \right].
    \label{eq:avg_tuple}
\end{equation}
This harmonic-number form is a direct consequence of the permutation definition (Appendix~\ref{app:avg}). It also serves as a baseline non-additive aggregate. The technical value here is that the tuple expression collapses to a predicate-level sufficient-statistics formula.

\textbf{Relation to prior work.} For a static database and a single-relation query with all facts endogenous, Eq.~\ref{eq:avg_tuple} was derived independently by Standke and Kimelfeld~\cite{standke2025tractability} as their Proposition~5.2. The cross-term sign in their printed statement is a typo. Their proof concludes with the expression equivalent to Eq.~\ref{eq:avg_tuple}, and the printed sign would violate Efficiency. Proposition~\ref{prop:avg} should therefore be read as a restatement of that tuple-level result in the windowed setting. The contribution of this section is the predicate-level form (Eq.~\ref{eq:avg_predicate}), its constant-size sufficient statistics, and its role as the base case of the maintenance framework of Sections~\ref{sec:maintenance}--\ref{sec:windows}.

\begin{proposition}[Exact AVG Attribution]
\label{prop:avg}
For $N \geq 2$, Eq.~\ref{eq:avg_tuple} is the exact Shapley value of $x_i$. For $N=1$, define the second term of Eq.~\ref{eq:avg_tuple} to be zero by the convention $H_1 - 1 = 0$ and the indeterminate $0/0$ resolved as $0$; then Eq.~\ref{eq:avg_tuple} yields $\phi_i = y_i$, which agrees with Efficiency. Eq.~\ref{eq:avg_predicate} below is the exact predicate attribution obtained by summing these tuple values over $C_{\mathcal{P}}$. The formulas hold verbatim for both count-based windows (with constant $N$) and time-based windows (with $N=N_t$, the active cardinality at evaluation time).
\end{proposition}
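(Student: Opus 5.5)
We would prove Eq.~\ref{eq:avg_tuple} from the permutation form of Eq.~\ref{eq:shapley_def}. Group the predecessor coalitions $S\subseteq W\setminus\{x_i\}$ by their size $k=|S|$. The Shapley weight $\frac{k!(N-k-1)!}{N!}$ equals $\frac{1}{N}\binom{N-1}{k}^{-1}$. So $\phi_i$ is $\frac{1}{N}\sum_{k=0}^{N-1}\mathbb{E}_k[\nu(S\cup\{x_i\})-\nu(S)]$, where $\mathbb{E}_k$ denotes the expectation over a uniformly random $k$-subset of $W\setminus\{x_i\}$. The only data-dependent quantity inside this expectation is $\sum_{j\in S}y_j$. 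By symmetry, its expectation is $k\,\bar y_{-i}$, where $\bar y_{-i}=(A(W)-y_i)/(N-1)$. Because $\nu_{AVG}$ is linear in that sum once the size is fixed, no higher moments are needed.

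Next we compute the marginal for each $k$.
\begin{itemize}
\item For $k=0$, the marginal is $y_i-0=y_i$, using $\nu(\emptyset)=0$.
\item For $k\ge 1$, let $s=\sum_{j\in S}y_j$. The marginal is
\[
\frac{s+y_i}{k+1}-\frac{s}{k}=\frac{y_i}{k+1}-\frac{s}{k(k+1)}.
\]
\end{itemize}
Taking expectations for $k\ge1$ gives $\frac{y_i-\bar y_{-i}}{k+1}$. Summing over $k=1,\dots,N-1$ gives $(y_i-\bar y_{-i})(H_N-1)$. Then
\[
\phi_i=\frac{1}{N}\bigl[y_i+(H_N-1)(y_i-\bar y_{-i})\bigr].
\]
Finally, $(N-1)(y_i-\bar y_{-i})=Ny_i-A(W)$, which yields exactly Eq.~\ref{eq:avg_tuple} for $N\ge2$. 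The one point needing care is the $k=0$ term. Because $\nu(\emptyset)=0$ rather than undefined, this term is isolated, and it produces the lone $y_i$ outside the harmonic factor. We expect this to be the main subtlety rather than a real obstacle.

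For $N=1$, the only coalition is $\emptyset$, so $\phi_i=y_i$ directly from the definition. With the stated convention, the second term vanishes and the formula agrees. As a consistency check, summing Eq.~\ref{eq:avg_tuple} over $W$ kills the cross term, since $\sum_i(Ny_i-A(W))=0$. This leaves $A(W)/N=\nu(W)$, which confirms Efficiency.

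The predicate form follows by summing Eq.~\ref{eq:avg_tuple} over $C_{\mathcal{P}}$ as in Eq.~\ref{eq:predicate_def}:
\[
\Phi_{\mathcal P}=\frac{1}{N}\Bigl[A_{\mathcal P}+\frac{H_N-1}{N-1}\bigl(NA_{\mathcal P}-m_{\mathcal P}A(W)\bigr)\Bigr].
\]
This matches Eq.~\ref{eq:avg_predicate} up to rearrangement. The window-type claim needs no extra argument. The derivation uses only the finite player set $W$ at evaluation time and its cardinality, so it holds verbatim for a count window with constant $N$ and for a time window with $N=N_t$.
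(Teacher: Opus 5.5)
Your proof is correct and follows essentially the same route as the paper's Appendix~\ref{app:avg}. Both stratify by predecessor-set size, isolate the empty-predecessor term to get the lone $y_i$, use $\mathbb{E}[A(S)]=r(A(W)-y_i)/(N-1)$ to obtain the expected marginal $\frac{1}{r+1}(y_i-\bar y_{-i})$, and sum with $\sum_{r=1}^{N-1}\frac{1}{r+1}=H_N-1$, while your explicit $N=1$ check and Efficiency check are details the paper leaves implicit.
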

The proof follows directly from the permutation definition and is given in Appendix~\ref{app:avg}.

For a predicate $\mathcal{P}$, let $m_{\mathcal{P}}=|C_{\mathcal{P}}|$ and $A_{\mathcal{P}}=\sum_{x_i \in C_{\mathcal{P}}}y_i$. Summing Eq.~\ref{eq:avg_tuple} over all tuples satisfying the predicate gives
\begin{equation}
    \Phi_{\mathcal{P}}^{AVG}(W)
    =
    \frac{1}{N}
    \left[
        A_{\mathcal{P}}+
        \frac{H_N-1}{N-1}
        \left(NA_{\mathcal{P}}-m_{\mathcal{P}}A(W)\right)
    \right].
    \label{eq:avg_predicate}
\end{equation}
For a fixed-size count window, exact AVG attribution is therefore determined by four quantities: $N$, $A(W)$, $m_{\mathcal{P}}$, and $A_{\mathcal{P}}$. For time-based windows, $N$ in Eq.~\ref{eq:avg_predicate} is replaced by $N_t$ (Section~\ref{sec:windows}).

\begin{example}[AVG Attribution]
Consider $W=\{2,4,10\}$ and a predicate $\mathcal{P}$ selecting the tuple with value $10$. Here $N=3$, $A(W)=16$, $m_{\mathcal{P}}=1$, $A_{\mathcal{P}}=10$, and $H_3=11/6$. Eq.~\ref{eq:avg_predicate} gives
\begin{equation}
    \Phi_{\mathcal{P}}^{AVG}(W)
    =
    \frac{1}{3}
    \left[
        10+\frac{5/6}{2}(30-16)
    \right]
    =
    \frac{95}{18}\approx 5.28.
\end{equation}
By the Efficiency axiom, the three tuple-level Shapley values sum to the window average, $\nu_{AVG}(W)=16/3$. Direct evaluation of Eq.~\ref{eq:avg_tuple} gives
\begin{equation}
    \left(-\frac{13}{18},\ \frac{7}{9},\ \frac{95}{18}\right),
\end{equation}
which indeed sum to $16/3$.
The tuple with value $2$ receives a negative attribution, $-13/18$, because its presence lowers the average relative to many coalitions from which it is absent. The tuple with value $4$ lies closer to the mean and receives a smaller positive share.
In the latency-monitoring reading of the example, the predicate still receives most of the attribution because the selected tuple is well above the global mean.
\end{example}

\subsection{Interpretation: Share, Lift, and Scale}
\label{sec:scale}
Eq.~\ref{eq:avg_predicate} can be rewritten exactly in a form that makes its meaning clear. Let $\bar{y}_{\mathcal{P}} = A_{\mathcal{P}}/m_{\mathcal{P}}$ and $\bar{y}_W = A(W)/N$ be the predicate and window means. Then
\begin{equation}
\begin{aligned}
    \Phi_{\mathcal{P}}^{AVG}(W)
    &=
    \underbrace{\frac{m_{\mathcal{P}}}{N}\,\bar{y}_{\mathcal{P}}}_{\text{share}}
    \;+\;
    w(N)\,
    \underbrace{\frac{m_{\mathcal{P}}}{N}\big(\bar{y}_{\mathcal{P}}-\bar{y}_W\big)}_{\text{lift } \ell_{\mathcal{P}}},
\end{aligned}
    \label{eq:share_lift}
\end{equation}
with $w(N)=N(H_N-1)/(N-1)$. The Shapley value is therefore the group's share of the window mean plus $w(N)\approx H_N-1\approx\ln N + \gamma - 1$ times its lift, where $\gamma\approx 0.577$ is the Euler--Mascheroni constant. Three consequences follow.

First, for AVG the axioms do not produce a new signal. They fix a \emph{specific weighting} of two elementary statistics. Each statistic is a poor explanation on its own. Over any partition of $W$ the lifts sum to zero, so they describe deviation from the mean but not the value itself. The shares sum to $\bar{y}_W$ but ignore whether a group raises or lowers the aggregate. Eq.~\ref{eq:share_lift} is the one combination of the two that satisfies the Shapley axioms. Section~\ref{sec:eval:lift} measures how often this weighting changes the explanation compared to ranking by lift alone on real data.

Second, attributions depend on the window size. The weight $w(N)$ grows like $\ln N$. Thus, if the composition of the window stays fixed (same $m_{\mathcal{P}}/N$, $\bar{y}_{\mathcal{P}}$, $\bar{y}_W$), the attribution of a high-lift group keeps growing with $N$ even though the aggregate does not change. By Efficiency, other groups shrink by the same amount. Attributions of windows with very different sizes are therefore not directly comparable. This matters for time-based windows, where $N_t$ varies (Section~\ref{sec:windows}). The variance coefficients behave the same way (e.g., $C_B = H_N - H_N^{(2)}$ below). When comparing across windows, we recommend reporting the share and lift components next to $\Phi_{\mathcal{P}}$, or comparing only windows of similar size. The case study quantifies the effect on real data (Section~\ref{sec:eval:casestudy}).

Third, the decomposition keeps large attributions honest. A headline value such as ``$\Phi_{\mathcal{P}}=+\$63$ against an aggregate of $\$44$'' (Section~\ref{sec:eval:casestudy}) splits into a moderate share and a large amplified lift. Reporting both parts avoids over-reading the amplified one.

\subsection{Population and Sample Variance}
\label{sec:variance}
Variance is more challenging than AVG because it depends on both first and second moments. Nevertheless, both population and sample variance admit exact sufficient-statistics representations once the game is defined on all coalitions. Let $A(S)=\sum_{x_j \in S} y_j$ and $B(S)=\sum_{x_j \in S} y_j^2$.

\subsubsection{Population Variance}
Define
\begin{equation}
    \nu_{VAR\_POP}(S)=
    \begin{cases}
        \frac{B(S)}{|S|}-\left(\frac{A(S)}{|S|}\right)^2, & |S|>0,\\
        0, & |S|=0.
    \end{cases}
\end{equation}
For a fixed player $i$ and a random predecessor set $S$ of size $r$ drawn uniformly without replacement from $W\setminus\{x_i\}$, the exact Shapley value reduces, after eliminating empty coalitions, to
\begin{equation}
    \phi_i^{VAR}(W)
    =
    \frac{1}{N}
    \sum_{r=1}^{N-1}
    \mathbb{E}_{|S|=r}
    \left[
        \nu_{VAR\_POP}(S\cup\{x_i\})-\nu_{VAR\_POP}(S)
    \right].
    \label{eq:var_exact_target}
\end{equation}
The $r=0$ term contributes $\nu_{VAR\_POP}(\{x_i\})-\nu_{VAR\_POP}(\emptyset)=0$, because the variance of any singleton is zero and $\nu_{VAR\_POP}(\emptyset)=0$ by definition. For $r \geq 1$, let $n=N-1$, $T_i=A(W)-y_i$, and $U_i=B(W)-y_i^2$. Sampling $S$ without replacement from $W \setminus \{x_i\}$ gives the moment expressions (derived in Appendix~\ref{app:variance}):
\begin{align}
    \mathbb{E}[A(S)] &= \tfrac{r}{n}T_i,\\
    \mathbb{E}[B(S)] &= \tfrac{r}{n}U_i,\\
    \mathbb{E}[A(S)^2] &=
    \tfrac{r}{n}U_i+
    \tfrac{r(r-1)}{n(n-1)}(T_i^2-U_i),
    \label{eq:sample_sum_second_moment}
\end{align}
These expressions are valid for $N>2$. The $N=2$ case is handled separately in Appendix~\ref{app:variance} by enumerating the two possible orderings. Substituting these expectations into Eq.~\ref{eq:var_exact_target} and expanding step-by-step (Appendix~\ref{app:variance}) yields the closed form
\begin{equation}
    \phi_i^{VAR}(W)
    =
    \frac{1}{N}
    \sum_{r=1}^{N-1}
    D_i(r),
    \label{eq:var_tuple}
\end{equation}
where
\begin{equation}
\begin{aligned}
    D_i(r)
    &=
    -\frac{U_i}{n(r+1)}
    +
    \left(\frac{1}{r^2}-\frac{1}{(r+1)^2}\right)
    E_i^{(2)}(r) \\
    &\quad
    -
    \frac{2ry_iT_i}{n(r+1)^2}
    +
    \frac{r y_i^2}{(r+1)^2},
\end{aligned}
    \label{eq:var_delta}
\end{equation}
and $E_i^{(2)}(r)$ denotes the right-hand side of Eq.~\ref{eq:sample_sum_second_moment}.

The predicate-level form follows by summing Eq.~\ref{eq:var_delta} over all $x_i \in C_{\mathcal{P}}$. With $m_{\mathcal{P}}=|C_{\mathcal{P}}|$, $A_{\mathcal{P}}=\sum_{x_i \in C_{\mathcal{P}}} y_i$, and $B_{\mathcal{P}}=\sum_{x_i \in C_{\mathcal{P}}} y_i^2$, define
\begin{align}
    U_{\mathcal{P}} &= m_{\mathcal{P}}B(W)-B_{\mathcal{P}}, \label{eq:UP}\\
    YT_{\mathcal{P}} &= A(W)A_{\mathcal{P}}-B_{\mathcal{P}}, \label{eq:YTP}\\
    Q_{\mathcal{P}} &= m_{\mathcal{P}}A(W)^2-2A(W)A_{\mathcal{P}} \nonumber\\
    &\quad +2B_{\mathcal{P}}-m_{\mathcal{P}}B(W). \label{eq:QP}
\end{align}
These three quantities arise from the predicate-level sums
\begin{equation}
    \sum_{i \in C_{\mathcal{P}}} U_i,\qquad
    \sum_{i \in C_{\mathcal{P}}} y_i T_i,\qquad
    \sum_{i \in C_{\mathcal{P}}} (T_i^2 - U_i),
\end{equation}
respectively (Appendix~\ref{app:variance}). Summing over $r$ and collecting terms (Appendix~\ref{app:variance}) gives the exact predicate attribution for population variance:
\begin{equation}
    \Phi_{\mathcal{P}}^{VAR}(W)
    =
    \frac{1}{N}
    \left(
        C_U U_{\mathcal{P}}
        +
        C_Q Q_{\mathcal{P}}
        +
        C_{YT} YT_{\mathcal{P}}
        +
        C_B B_{\mathcal{P}}
    \right).
    \label{eq:var_predicate_constant}
\end{equation}
The coefficients arise in Appendix~\ref{app:variance} as finite sums over the coalition size $r$. Partial fractions and telescoping collapse every one of them to a closed form in the harmonic numbers $H_N$ and $H_N^{(2)}=\sum_{r=1}^{N}1/r^2$. Writing $G_N = H_N - H_N^{(2)}$,
\begin{align}
    C_B &= G_N,
    &
    C_{YT} &= -\frac{2\,G_N}{n},
    \label{eq:coeff_closed_1}\\
    C_U &= \frac{1-\tfrac{1}{N}-G_N}{n},
    &
    C_Q &= \frac{2\,G_N+\tfrac{1}{N}-1}{n(n-1)}.
    \label{eq:coeff_closed_2}
\end{align}
So exact population-variance attribution needs only $N$, $A(W)$, $B(W)$, $m_{\mathcal{P}}$, $A_{\mathcal{P}}$, and $B_{\mathcal{P}}$, plus $\mathcal{O}(1)$ arithmetic on $H_N$ and $H_N^{(2)}$. Both harmonic numbers update incrementally ($H_{N+1}=H_N+\frac{1}{N+1}$ and $H^{(2)}_{N+1}=H^{(2)}_N+\frac{1}{(N+1)^2}$). Thus no precomputation, no lookup table, and no $\mathcal{O}(N)$ work is needed at any cardinality. Time-based windows evaluate the same expressions at $N=N_t$ (Section~\ref{sec:windows}). We verified Eqs.~\ref{eq:coeff_closed_1}--\ref{eq:coeff_closed_2} against direct summation for all $N\le 10^4$.

\subsubsection{Sample Variance}
Database systems commonly expose sample variance as \texttt{VAR\_SAMP}. To make the Shapley game total on all coalitions, we set the singleton and empty values to zero:
\begin{equation}
    \nu_{VAR\_SAMP}(S)=
    \begin{cases}
        \frac{B(S)-A(S)^2/|S|}{|S|-1}, & |S|>1,\\
        0, & |S|\leq 1.
    \end{cases}
\end{equation}
This convention reflects the fact that no dispersion is observable from fewer than two tuples. It also avoids undefined game values in the Shapley sum.

The derivation follows the same steps as for population variance and is given in full in Appendix~\ref{app:variance}. Only the case $r=1$ needs separate treatment, because a one-tuple predecessor set has game value zero. The result uses the same summaries $U_{\mathcal{P}}$, $YT_{\mathcal{P}}$, and $Q_{\mathcal{P}}$:
\begin{equation}
    \Phi_{\mathcal{P}}^{VAR\_SAMP}(W)
    =
    \frac{1}{N}
    \left(
        C_U^{S} U_{\mathcal{P}}
        +
        C_Q^{S} Q_{\mathcal{P}}
        +
        C_{YT}^{S} YT_{\mathcal{P}}
        +
        C_B^{S} B_{\mathcal{P}}
    \right),
    \label{eq:varsamp_predicate_constant}
\end{equation}
with coefficients that collapse even further than in the population case, to expressions in $H_N$ alone:
\begin{align}
    C^S_U &= \frac{2-H_N}{n},
    &
    C^S_{YT} &= \frac{2\,(1-H_N)}{n},
    \label{eq:coeff_closed_s1}\\
    C^S_B &= H_N-1,
    &
    C^S_Q &= \frac{2\big(H_N-\tfrac{3}{2}\big)}{n(n-1)},
    \label{eq:coeff_closed_s2}
\end{align}
Thus sample variance inherits the same $\mathcal{O}(1)$ incremental coefficient maintenance.

\begin{proposition}[Variance]
\label{prop:variance}
Under the coalition-value definitions above, Eq.~\ref{eq:var_predicate_constant} and Eq.~\ref{eq:varsamp_predicate_constant} give exact predicate-level Shapley attribution for population and sample variance. The formulas hold for count-based windows and for time-based windows after substituting $N=N_t$.
\end{proposition}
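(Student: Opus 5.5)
We prove the statement by working from the size-stratified form of the Shapley value: first compute the exact marginal contribution of $x_i$ as a function of the predecessor moments, then take expectations over a random predecessor set of fixed size, and finally sum over sizes and over $C_{\mathcal{P}}$.

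\textbf{Stratified form.} We group Eq.~\ref{eq:shapley_def} by $r=|S|$. The weight $\frac{r!(N-r-1)!}{N!}$ times $\binom{N-1}{r}$ equals $1/N$, so
\begin{equation*}
\phi_i=\frac1N\sum_{r=0}^{N-1}\mathbb{E}_{|S|=r}\big[\nu(S\cup\{x_i\})-\nu(S)\big],
\end{equation*}
with $S$ uniform over the $r$-subsets of $W\setminus\{x_i\}$. For population variance the $r=0$ term vanishes, which gives Eq.~\ref{eq:var_exact_target}.

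\textbf{Population variance, tuple level.} Take $r\ge1$. Write $\nu(S)=B(S)/r-A(S)^2/r^2$, and
\begin{equation*}
\nu(S\cup\{x_i\})=\frac{B(S)+y_i^2}{r+1}-\frac{\big(A(S)+y_i\big)^2}{(r+1)^2}.
\end{equation*}
The marginal contribution is therefore linear in $B(S)$, $A(S)$ and $A(S)^2$, with coefficients depending only on $r$ and $y_i$. Taking expectations only needs the moments of a simple random sample without replacement from the $n=N-1$ remaining values. These are $\mathbb{E}A(S)=\frac rnT_i$ and $\mathbb{E}B(S)=\frac rnU_i$. For the second moment, expand $A(S)^2$ into the diagonal terms plus $\sum_{j\ne k}y_jy_k$. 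The pair-inclusion probability is $\frac{r(r-1)}{n(n-1)}$ and $\sum_{j\ne k}y_jy_k=T_i^2-U_i$, which gives Eq.~\ref{eq:sample_sum_second_moment}. Collecting terms reproduces $D_i(r)$ in Eq.~\ref{eq:var_delta}: the $B$ coefficients $\frac{1}{r+1}-\frac1r$ combine to $-\frac{1}{r(r+1)}$, then multiply by $\frac rnU_i$.

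\textbf{Predicate level.} Summing over $x_i\in C_{\mathcal{P}}$, the tuple quantities aggregate linearly:
\begin{equation*}
\sum U_i=U_{\mathcal{P}},\qquad \sum y_iT_i=YT_{\mathcal{P}},\qquad \sum\big(T_i^2-U_i\big)=Q_{\mathcal{P}},
\end{equation*}
which follows by direct expansion of $(A(W)-y_i)^2$. The remaining terms $\sum y_i^2=B_{\mathcal{P}}$ and $U_i/n$ come from $E_i^{(2)}$. The result has the form of Eq.~\ref{eq:var_predicate_constant}, with each coefficient a finite sum over $r$ of a rational function of $r$ (and $n$). Examples are $\sum_r\frac{r}{(r+1)^2}$, $\sum_r\big(\frac1{r^2}-\frac1{(r+1)^2}\big)\frac rn$, and $\sum_r\big(\frac1{r^2}-\frac1{(r+1)^2}\big)\frac{r(r-1)}{n(n-1)}$.

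\textbf{Main obstacle.} The hard step is collapsing these sums to Eqs.~\ref{eq:coeff_closed_1}--\ref{eq:coeff_closed_2}. The plan is to write each summand in partial fractions over $\frac1r$, $\frac1{r+1}$, $\frac1{r^2}$ and $\frac1{(r+1)^2}$. For example,
\begin{equation*}
\frac{r}{(r+1)^2}=\frac1{r+1}-\frac1{(r+1)^2},
\end{equation*}
and summing over $r=1,\dots,N-1$ gives $(H_N-1)-(H_N^{(2)}-1)=G_N$. Differences of consecutive terms telescope, leaving only boundary terms such as $1-\frac1N$. Since $U_i$ appears both directly and inside $E_i^{(2)}$, the $C_U$ and $C_Q$ contributions must be tracked carefully. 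Small cases $N\le 4$ against brute force serve as a check. $N=2$ is handled separately: there $n-1=0$, so the pair term is absent, and we verify directly from the two orderings that the formula holds with $C_Q$ multiplying $Q_{\mathcal{P}}$, whose pair contribution vanishes. $N=1$ is trivial, since both sides are $0$.

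\textbf{Sample variance.} The same pipeline applies with $\nu(S)=\frac{B(S)-A(S)^2/r}{r-1}$ for $r\ge2$. The differences are in the low-order terms: for $r=0$ the contribution is $0$, and for $r=1$ it is $\nu(\{x_i,x_j\})=\frac{(y_i-y_j)^2}{2}$, whose expectation over $j$ is $\frac{1}{2}\big(y_i^2-\frac{2y_iT_i}{n}+\frac{U_i}{n}\big)$. For $r\ge2$ the coefficient differences, such as $\frac1r-\frac1{r-1}$ on $B$, telescope even more completely. This leaves only $H_N$ and produces Eqs.~\ref{eq:coeff_closed_s1}--\ref{eq:coeff_closed_s2}.

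\textbf{Window types.} The claim for both count-based and time-based windows follows because every step concerns a single fixed finite player set of size $N$. For a time-based window this means evaluating at $N=N_t$.
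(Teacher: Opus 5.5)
Your proposal matches the paper's proof step for step: stratify by predecessor size $r$, substitute the without-replacement moments of $A(S)$, $B(S)$, $A(S)^2$, aggregate over $C_{\mathcal{P}}$ via $U_{\mathcal{P}}$, $YT_{\mathcal{P}}$, $Q_{\mathcal{P}}$, $B_{\mathcal{P}}$, and telescope the coefficient sums, with the sample-variance $r=1$ term and the $N=2$ case treated separately. The telescoping you leave schematic is equally schematic in the paper, and it does go through, e.g.\ via $\frac{2r+1}{r(r+1)^2}=\frac1r-\frac1{r+1}+\frac1{(r+1)^2}$ for $C_U$ and $\frac{(2r+1)(r-1)}{r(r+1)^2}=-\frac1r+\frac{3}{r+1}-\frac{2}{(r+1)^2}$ for $C_Q$.
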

The proof is given in Appendix~\ref{app:variance}.

\textbf{Empirical Validation.} Section~\ref{sec:eval:exactness} verifies the closed forms (Eq.~\ref{eq:avg_tuple}, Eq.~\ref{eq:var_predicate_constant}, and Eq.~\ref{eq:varsamp_predicate_constant}) against exhaustive coalition enumeration over $10{,}010$ random window configurations. The observed deviations are at double-precision floating-point level.

\textbf{Sign interpretation.} Variance attributions can be negative. A tuple or predicate close to the window mean tends to reduce dispersion relative to coalitions where it is absent. An extreme tuple or predicate tends to increase dispersion. Thus $\Phi_{\mathcal{P}}^{VAR}<0$ should be read as a stabilizing contribution to the variance aggregate, not as an error.

\begin{example}[Variance Attribution]
For the same window $W=\{2,4,10\}$, the population variance is $104/9$. The tuple-level population-variance Shapley values are $(3.685,\,0.185,\,7.685)$, summing to $104/9$. A predicate selecting value $10$ receives positive variance attribution because it is the most dispersion-increasing tuple. Under sample variance with the singleton convention above, the Shapley values are $(5.444,\,-1.556,\,13.444)$, summing to $17.333 \approx 52/3 = \nu_{VAR\_SAMP}(W)$. The middle tuple receives negative attribution because it tends to pull coalitions toward the mean.
\end{example}

\subsection{Linearity in Predicate Summaries}
\label{sec:linearity}
The structural property that enables predicate-level evaluation is summarized by the following lemma.

\begin{lemma}[Affine Linearity]
\label{lem:linearity}
For SUM, COUNT, AVG, population variance, and sample variance, once $N$, $A(W)$, and $B(W)$ are fixed, the predicate-level Shapley attribution $\Phi_{\mathcal{P}}(W)$ is an affine function of the predicate summary triple $(m_{\mathcal{P}}, A_{\mathcal{P}}, B_{\mathcal{P}})$.
\end{lemma}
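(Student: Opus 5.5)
The plan is a case-by-case inspection of the closed forms already established, relying on one structural observation. Fix $N$, $A(W)$ and $B(W)$. Then every coefficient appearing in the formulas depends only on the fixed data: $H_N$, $H_N^{(2)}$, $G_N$, $n=N-1$, and the constants $C_U,C_Q,C_{YT},C_B$ and their sample counterparts. What remains is to check that each formula uses $(m_{\mathcal{P}},A_{\mathcal{P}},B_{\mathcal{P}})$ only through terms of degree at most one.

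For SUM and COUNT this is immediate, since $\Phi^{SUM}_{\mathcal{P}}=A_{\mathcal{P}}$ and $\Phi^{COUNT}_{\mathcal{P}}=m_{\mathcal{P}}$ are linear coordinate projections. For AVG, I would invoke Proposition~\ref{prop:avg} and expand Eq.~\ref{eq:avg_predicate} as
\[
\Phi^{AVG}_{\mathcal{P}}=\Bigl(\tfrac{1}{N}+\tfrac{H_N-1}{N-1}\Bigr)A_{\mathcal{P}}-\tfrac{(H_N-1)A(W)}{N(N-1)}\,m_{\mathcal{P}},
\]
which is linear in $(m_{\mathcal{P}},A_{\mathcal{P}})$ with coefficients determined by $N$ and $A(W)$. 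The degenerate case $N=1$ is handled by the stated convention, under which the second term vanishes and $\Phi=A_{\mathcal{P}}$.

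For the two variances, I would invoke Proposition~\ref{prop:variance}. By Eqs.~\ref{eq:UP}--\ref{eq:QP}, each of $U_{\mathcal{P}}$, $YT_{\mathcal{P}}$ and $Q_{\mathcal{P}}$ is a linear combination of $m_{\mathcal{P}},A_{\mathcal{P}},B_{\mathcal{P}}$ whose coefficients are monomials in the fixed quantities $A(W)$ and $B(W)$. For example, $Q_{\mathcal{P}}=(A(W)^2-B(W))m_{\mathcal{P}}-2A(W)A_{\mathcal{P}}+2B_{\mathcal{P}}$. No products of predicate summaries appear. Substituting into Eqs.~\ref{eq:var_predicate_constant} and~\ref{eq:varsamp_predicate_constant}, whose outer coefficients depend only on $N$, gives a linear (hence affine) function.

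The only real subtlety, and the step I expect to need care, is making sure the linearity is genuine rather than an artifact of the closed forms. In particular, the $r=1$ special case for sample variance and the $N=2$ (and $N=1$) boundary cases for population variance must not introduce nonlinear dependence. I would address this by noting a more basic fact: $\Phi_{\mathcal{P}}$ is by definition the sum $\sum_{i\in C_{\mathcal{P}}}\phi_i$, and each $\phi_i$ in Eqs.~\ref{eq:avg_tuple}, \ref{eq:var_tuple} and~\ref{eq:var_delta} is a polynomial in $y_i$ of degree at most two, with coefficients fixed by $(N,A(W),B(W))$. The quantities $T_i=A(W)-y_i$ and $U_i=B(W)-y_i^2$ enter at most quadratically in $y_i$; for instance, $T_i^2$ contributes $A(W)^2-2A(W)y_i+y_i^2$. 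Summing a fixed quadratic $c_0+c_1y_i+c_2y_i^2$ over $C_{\mathcal{P}}$ gives $c_0m_{\mathcal{P}}+c_1A_{\mathcal{P}}+c_2B_{\mathcal{P}}$. This argument covers every case uniformly, including the small-$N$ ones, where the per-tuple value, obtained by direct enumeration, is still such a quadratic. It yields the affine (in fact linear, with zero constant term) claim.
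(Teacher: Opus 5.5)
Your proposal is correct and follows the paper's proof. The paper also argues case by case through Eq.~\ref{eq:avg_predicate}, Eq.~\ref{eq:var_predicate_constant}, and Eq.~\ref{eq:varsamp_predicate_constant}, using that $U_{\mathcal{P}}$, $YT_{\mathcal{P}}$, $Q_{\mathcal{P}}$ are linear in $(m_{\mathcal{P}},A_{\mathcal{P}},B_{\mathcal{P}})$ once $A(W)$ and $B(W)$ are fixed. Your uniform tuple-level argument, that each $\phi_i$ is a quadratic in $y_i$ with coefficients fixed by $(N,A(W),B(W))$ and is then summed over $C_{\mathcal{P}}$, is the same observation the paper states in the paragraph right after its proof.
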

\begin{proof}
For SUM, $\Phi_{\mathcal{P}}^{SUM}(W) = A_{\mathcal{P}}$. For COUNT, $\Phi_{\mathcal{P}}^{COUNT}(W) = m_{\mathcal{P}}$. For AVG, Eq.~\ref{eq:avg_predicate} is affine in $(m_{\mathcal{P}}, A_{\mathcal{P}})$. For population and sample variance, the auxiliary quantities $U_{\mathcal{P}}$, $YT_{\mathcal{P}}$, $Q_{\mathcal{P}}$ defined in Eq.~\ref{eq:UP}--\ref{eq:QP} are themselves linear in $(m_{\mathcal{P}}, A_{\mathcal{P}}, B_{\mathcal{P}})$ once $A(W)$ and $B(W)$ are fixed; substituting into Eq.~\ref{eq:var_predicate_constant} or Eq.~\ref{eq:varsamp_predicate_constant} preserves the affine property.
\end{proof}

Given the tuple-level closed forms, Lemma~\ref{lem:linearity} is immediate. Each $\phi_i$ is affine in $(1, y_i, y_i^2)$ with coefficients fixed by $(N, A(W), B(W))$, and summation over $C_{\mathcal{P}}$ preserves affinity. The technical content of this section therefore lies in the tuple-level derivations (Eq.~\ref{eq:avg_tuple} and Appendix~\ref{app:variance}) and in the coefficient closed forms. The lemma itself is the simple but load-bearing observation that converts them into predicate-level answers without ever materializing tuple-level Shapley values. It also enables the additivity-across-subsets corollary used by pane-based maintenance frameworks (Section~\ref{sec:additivity}). Finally, it makes sampling-based approximation of $\Phi_{\mathcal{P}}$ for ad-hoc predicates well posed as estimation of an affine functional of maintained summaries.

\subsection{Summary Maintenance Under Slides}
\label{sec:maintenance}
The pane-level summaries for a predicate $\mathcal{P}$ are updated by the window delta. For a count-based slide $\Delta=(x_{out},x_{in})$ and numeric attribute $y$:
\begin{align}
    m_{\mathcal{P}} &\leftarrow m_{\mathcal{P}} + \ind_{\mathcal{P}}(x_{in})-\ind_{\mathcal{P}}(x_{out}),
    \label{eq:update_m}\\
    A_{\mathcal{P}} &\leftarrow A_{\mathcal{P}} + \ind_{\mathcal{P}}(x_{in})y_{in}
    \nonumber\\
    &\qquad - \ind_{\mathcal{P}}(x_{out})y_{out},
    \label{eq:update_A}\\
    B_{\mathcal{P}} &\leftarrow B_{\mathcal{P}} + \ind_{\mathcal{P}}(x_{in})y_{in}^2
    \nonumber\\
    &\qquad - \ind_{\mathcal{P}}(x_{out})y_{out}^2.
    \label{eq:update_B}
\end{align}
For a multi-tuple time-based delta $\Delta_t=(X^{out}_t,X^{in}_t)$ the same updates apply tuple-by-tuple over both sets, so the cost is $\mathcal{O}(|\Delta_t|)$ per affected predicate. Global summaries $N$ (or $N_t$), $A(W)$, and $B(W)$ are updated analogously.

\paragraph{Numerical stability.}
Long-running streams expose two floating-point issues. The first is cancellation in the variance formulas. Terms such as $A(W)A_{\mathcal{P}}-B_{\mathcal{P}}$ and the $m_{\mathcal{P}}A(W)^2$ part of $Q_{\mathcal{P}}$ subtract nearly equal quantities. When the mean is much larger than the spread, most significant digits cancel and the result is dominated by rounding noise. This is the same failure mode as classical one-pass variance. A standard remedy is to center the values first, $y_i \leftarrow y_i - c$ for a coarse location estimate $c$. AVG attribution is recovered by adding back $c\,m_{\mathcal{P}}/N$, and VAR attribution is invariant to the shift. The second issue is drift from repeated updates. Adding and subtracting millions of increments into the same summaries slowly loses low-order bits. Pane-based maintenance (Corollary~\ref{cor:panes}) avoids most of this for free. Each pane keeps its own partial sums, and the window summary is rebuilt by adding a few pane totals rather than a long history of tiny updates. Compensated summation (which carries a small correction for bits lost at each add) can shrink whatever drift remains. Empirically, the hard case is the one that needs the shift. With $N=10^5$, values $\sim\mathcal{N}(10^6,1)$, and $10^6$ slides of plain additive updates, unshifted variance attribution drifted to $23\%$ relative error. The same run with $c=10^6$ stayed accurate to $7\times10^{-13}$. On our month-long taxi case study, where the fare mean is comparable to its spread (Section~\ref{sec:eval:casestudy}), ordinary double-precision updates already kept the Efficiency identity within $10^{-10}$ relative deviation. No special mitigation was required.

\begin{algorithm}[t]
\caption{Per-slide summary maintenance}
\label{alg:summary-maintenance}
\begin{algorithmic}[1]
\State \textbf{Input:} delta $\Delta_t=(X^{out}_t,X^{in}_t)$ (for count-based slides, take $X^{out}_t=\{x_{out}\}$ and $X^{in}_t=\{x_{in}\}$), set $\mathcal{R}$ of predicates whose summaries are maintained
\State Update global summaries $N$ (or $N_t$), $A(W)$, $B(W)$ over all tuples in $X^{out}_t \cup X^{in}_t$
\ForAll{tuples $x \in X^{out}_t \cup X^{in}_t$}
    \ForAll{$\mathcal{P} \in \mathcal{R}$ with $\mathcal{P}(x)=1$}
        \State Update $m_{\mathcal{P}}$, $A_{\mathcal{P}}$, $B_{\mathcal{P}}$ using Eqs.~\ref{eq:update_m}--\ref{eq:update_B}
    \EndFor
\EndFor
\If{$N$ (or $N_t$) changed since last evaluation}
    \State Update running $H_N$, $H^{(2)}_N$ (or $H_{N_t}$, $H^{(2)}_{N_t}$) and evaluate the coefficient closed forms Eqs.~\ref{eq:coeff_closed_1}--\ref{eq:coeff_closed_s2} \Comment{$\mathcal{O}(1)$}
\EndIf
\State On query, evaluate the requested closed-form attribution
\end{algorithmic}
\end{algorithm}

Combined with Lemma~\ref{lem:linearity}, the additive update rules above give the following corollary.

\begin{corollary}[Per-Slide Cost]
\label{cor:perslide}
For SUM, COUNT, AVG, population variance, and sample variance, the predicate attribution $\Phi_{\mathcal{P}}(W)$ can be evaluated in $\mathcal{O}(1)$ from the maintained summaries and the running harmonic numbers $H_N$, $H^{(2)}_N$ (Eqs.~\ref{eq:coeff_closed_1}--\ref{eq:coeff_closed_s2}). The per-slide summary update cost is $\mathcal{O}(1)$ per affected predicate for count-based windows and $\mathcal{O}(|\Delta_t|)$ for time-based windows; no coefficient precomputation or caching is required.
\end{corollary}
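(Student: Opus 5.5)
The corollary should follow by combining Lemma~\ref{lem:linearity} with the update rules Eqs.~\ref{eq:update_m}--\ref{eq:update_B}. I will treat the two claims separately: the \emph{evaluation} cost of $\Phi_{\mathcal{P}}(W)$ given the summaries, and the \emph{update} cost of the summaries under a slide.

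For evaluation, I will go through the aggregates one at a time. SUM and COUNT return $A_{\mathcal{P}}$ and $m_{\mathcal{P}}$ directly. AVG is Eq.~\ref{eq:avg_predicate}, a fixed arithmetic expression in $N$, $H_N$, $A(W)$, $m_{\mathcal{P}}$, and $A_{\mathcal{P}}$. For population and sample variance, the auxiliary quantities $U_{\mathcal{P}}$, $YT_{\mathcal{P}}$, and $Q_{\mathcal{P}}$ (Eqs.~\ref{eq:UP}--\ref{eq:QP}) are each a constant number of multiplications and additions of maintained scalars. The four coefficients of each variance form are closed forms in $N$, $H_N$, and $H^{(2)}_N$ (Eqs.~\ref{eq:coeff_closed_1}--\ref{eq:coeff_closed_s2}). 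Eqs.~\ref{eq:var_predicate_constant} and~\ref{eq:varsamp_predicate_constant} are then fixed-length linear combinations of these. Under the unit-cost arithmetic model, every case is therefore $\mathcal{O}(1)$, and no sum over $r$ or over tuples appears. The correctness of these expressions is supplied by Propositions~\ref{prop:avg} and~\ref{prop:variance}, which I take as given.

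For updates in a count-based slide, Eqs.~\ref{eq:update_m}--\ref{eq:update_B} touch each affected predicate's triple with a constant number of additions. The global $A(W)$ and $B(W)$ change likewise, and $N$ is unchanged. For time-based slides, the same rules are applied once per tuple in $X^{out}_t\cup X^{in}_t$, which gives $\mathcal{O}(|\Delta_t|)$ work per affected predicate.

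The remaining point, and the only step needing care, is the harmonic numbers when $N_t$ changes. For this I will use the incremental recurrences $H_{N+1}=H_N+\frac{1}{N+1}$ and $H^{(2)}_{N+1}=H^{(2)}_N+\frac{1}{(N+1)^2}$, together with their inverses for decrements. Each insertion or eviction moves $N_t$ by one, so this costs $\mathcal{O}(1)$ per tuple in the delta, which is absorbed into $\mathcal{O}(|\Delta_t|)$. Alternatively, the net change $|N_{t'}-N_t|\le|\Delta_t|$ can be applied in one pass. Because the coefficients are recomputed from these running values in $\mathcal{O}(1)$, no precomputed table indexed by $N$ is needed, which proves the ``no caching'' clause.
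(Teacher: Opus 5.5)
Your proposal is correct and matches the paper's own justification: there the corollary follows from combining Lemma~\ref{lem:linearity} (fixed-length closed forms in the maintained summaries) with the additive update rules Eqs.~\ref{eq:update_m}--\ref{eq:update_B} and the incremental harmonic-number recurrences of Section~\ref{sec:windows}. Your use of the downward recurrences $H_{N-1}=H_N-\frac{1}{N}$ and $H^{(2)}_{N-1}=H^{(2)}_N-\frac{1}{N^2}$ justifies the ``no caching'' clause slightly more cleanly than the lazily extended prefix arrays the paper suggests for practice.
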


Table~\ref{tab:complexity} contrasts this with naive recomputation per window.

\begin{table}[t]
\small
\setlength{\tabcolsep}{3pt}
\renewcommand{\arraystretch}{0.95}
\centering
\begin{tabular}{|l|c|c|}
\hline
\textbf{Aggregate Type} & \textbf{Recompute per Window} & \textbf{Closed-Form Maintenance} \\ \hline
\multicolumn{3}{|c|}{\textit{Count-based windows (constant $N$)}} \\ \hline
SUM/COUNT & $\mathcal{O}(N)$ & $\mathcal{O}(1)$ exact update \\ \hline
AVG & $\mathcal{O}(N)$ & $\mathcal{O}(1)$ exact update \\ \hline
VAR\_POP/VAR\_SAMP & $\mathcal{O}(N)$ & $\mathcal{O}(1)$ exact update \\ \hline
\multicolumn{3}{|c|}{\textit{Time-based windows (variable $N_t$, delta size $|\Delta_t|$)}} \\ \hline
SUM/COUNT & $\mathcal{O}(N_t)$ & $\mathcal{O}(|\Delta_t|)$ exact update \\ \hline
AVG & $\mathcal{O}(N_t)$ & $\mathcal{O}(|\Delta_t|)$ exact update \\ \hline
VAR\_POP/VAR\_SAMP & $\mathcal{O}(N_t)$ & $\mathcal{O}(|\Delta_t|)$ exact update \\ \hline
\end{tabular}
\caption{Exact maintenance cost per affected predicate. Constant-time entries assume (i)~the predicate summaries are maintained and (ii)~that tuple-membership tests are $\mathcal{O}(1)$. All attribution coefficients are closed forms in the running harmonic numbers $H_N$ and $H^{(2)}_N$ (Eqs.~\ref{eq:coeff_closed_1}--\ref{eq:coeff_closed_s2}). Coefficient access is therefore $\mathcal{O}(1)$ for both window types with no precomputation.}
\label{tab:complexity}
\end{table}

\subsection{Additivity Across Disjoint Subsets}
\label{sec:additivity}
The summaries $A(W)$, $B(W)$, $m_{\mathcal{P}}$, $A_{\mathcal{P}}$, $B_{\mathcal{P}}$ are all sums of per-tuple quantities. They are therefore additive over disjoint partitions of $W$. The following corollary captures this systems-relevant consequence.

\begin{corollary}[Pane Compatibility]
\label{cor:panes}
Let $\{P_1,\dots,P_k\}$ be any partition of $W$. Then
\begin{equation}
    A(W)=\sum_{j=1}^{k}A(P_j), \quad
    B(W)=\sum_{j=1}^{k}B(P_j),
\end{equation}
and analogously for $m_{\mathcal{P}}$, $A_{\mathcal{P}}$, $B_{\mathcal{P}}$. Consequently, the closed forms Eq.~\ref{eq:avg_predicate}, Eq.~\ref{eq:var_predicate_constant}, and Eq.~\ref{eq:varsamp_predicate_constant} can be evaluated from per-pane summaries via additive merging, without merging local Shapley vectors or applying any global correction.
\end{corollary}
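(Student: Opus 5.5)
The plan is to prove the two parts of Corollary~\ref{cor:panes} separately. The decomposition identities are immediate from the definitions. The consequence then follows by composing these identities with Lemma~\ref{lem:linearity} and the closed forms.

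\textbf{Additivity of the summaries.} Every summary in question has the form $\sum_{x_i\in W} g(x_i)$ for a per-tuple function $g$. Specifically:
\begin{itemize}
\item $g(x_i)=y_i$ for $A(W)$;
\item $g(x_i)=y_i^2$ for $B(W)$;
\item $g(x_i)=\ind_{\mathcal{P}}(x_i)$, $\ind_{\mathcal{P}}(x_i)y_i$, and $\ind_{\mathcal{P}}(x_i)y_i^2$ for $m_{\mathcal{P}}$, $A_{\mathcal{P}}$, and $B_{\mathcal{P}}$, respectively.
\end{itemize}
Since $\{P_1,\dots,P_k\}$ is a partition, each $x_i\in W$ lies in exactly one $P_j$. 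Splitting the finite sum along the partition gives $\sum_{x_i\in W}g(x_i)=\sum_{j}\sum_{x_i\in P_j}g(x_i)$. For the predicate summaries, $C_{\mathcal{P}}\cap P_j$ partitions $C_{\mathcal{P}}$ in the same way, so the same splitting applies. The window size is also additive, since $N=\sum_j |P_j|$.

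\textbf{Consequence for the closed forms.} Eq.~\ref{eq:avg_predicate}, Eq.~\ref{eq:var_predicate_constant}, and Eq.~\ref{eq:varsamp_predicate_constant} depend on $W$ only through the tuple $(N, A(W), B(W), m_{\mathcal{P}}, A_{\mathcal{P}}, B_{\mathcal{P}})$. The coefficients depend only on $N$ via $H_N$ and $H_N^{(2)}$, by Eqs.~\ref{eq:coeff_closed_1}--\ref{eq:coeff_closed_s2}. The auxiliary quantities $U_{\mathcal{P}}, YT_{\mathcal{P}}, Q_{\mathcal{P}}$ are explicit functions of this tuple by Eqs.~\ref{eq:UP}--\ref{eq:QP}. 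Hence, once the tuple is obtained by summing per-pane summaries, substituting it into the formulas yields exactly $\Phi_{\mathcal{P}}(W)$. This holds by Propositions~\ref{prop:avg} and~\ref{prop:variance}.

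\textbf{Main subtlety.} The one point needing care is not to claim that $\Phi_{\mathcal{P}}(W)=\sum_j \Phi_{\mathcal{P}}(P_j)$. That identity is false, because the coefficients depend on the global $N$, and the terms $m_{\mathcal{P}}A(W)$ and $m_{\mathcal{P}}A(W)^2$ are nonlinear in the global sums. The statement asserts something weaker and correct: merging happens at the level of the sufficient statistics, and the closed form is evaluated once on the merged values. This is why no local Shapley vectors are merged and no global correction is needed. I would state this explicitly, noting that the nonlinearity in $(N,A(W),B(W))$ is harmless because those are merged before evaluation. Lemma~\ref{lem:linearity} then additionally shows that, for fixed global summaries, the attribution is affine in per-pane predicate contributions. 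This gives the optional refinement that pane contributions $(m,A,B)$ can be combined linearly.
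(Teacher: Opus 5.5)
Your proposal is correct and follows the paper's argument. The paper justifies the corollary by observing that $A(W)$, $B(W)$, $m_{\mathcal{P}}$, $A_{\mathcal{P}}$, $B_{\mathcal{P}}$ are sums of per-tuple quantities, hence additive over any partition, and that the closed forms are evaluated once after pane merging; your explicit inclusion of $N=\sum_j |P_j|$ and your warning that $\Phi_{\mathcal{P}}(W)\neq\sum_j \Phi_{\mathcal{P}}(P_j)$ make precise what the paper leaves implicit.
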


This makes the closed forms drop-in compatible with existing pane-based sliding-window aggregation frameworks~\cite{li2005pane,tangwongsan2015general}. The Shapley statistics are added to the standard set of per-pane aggregates, and the closed-form formula is evaluated once after pane merging.

\paragraph{Toward a characterization.}
All aggregates treated here share one property. Once a few window-level statistics are fixed, the expected Shapley marginal contribution of a tuple, or of a predicate coalition, depends only on its low-order moments. Section~\ref{sec:boundary} makes this precise in both directions. It gives a sufficiency theorem covering every \emph{moment-polynomial} game and an impossibility result showing that order statistics escape every fixed moment order.

\section{Compositional Predicates via Atomic Refinement}
\label{sec:overlap}

When the system maintains $K$ predicates $\mathcal{P}_1,\dots,\mathcal{P}_K$, two phenomena complicate the per-predicate maintenance described above. First, predicates may \emph{overlap}: a tuple can satisfy several at once. The family $\{\Phi_{\mathcal{P}_k}\}_{k=1}^{K}$ then counts those tuples multiple times and does not sum to $\nu(W)-\nu(\emptyset)$. Second, users frequently issue \emph{compositional} queries (intersections, unions, complements, set-differences). For these queries, a single per-predicate summary is insufficient. We address both by maintaining sufficient statistics on the \emph{atomic refinement} of the predicates.

\subsection{Signatures and Atoms}
Assign to each active tuple $x$ a Boolean signature
\begin{equation}
    \mathrm{sig}(x) = (\mathcal{P}_1(x),\dots,\mathcal{P}_K(x)) \in \{0,1\}^K.
    \label{eq:signature}
\end{equation}
For each $s\in\{0,1\}^K$, define the \emph{atom} $\alpha_s = \{x \in W \mid \mathrm{sig}(x)=s\}$. The family $\{\alpha_s\}_{s \in \{0,1\}^K}$ partitions $W$. Let $\mathcal{S}_W = \{s : \alpha_s \neq \emptyset\}$ be the set of \emph{active} signatures; $|\mathcal{S}_W| \leq \min(2^K, N)$. For each active signature $s$, we maintain
\begin{equation}
    m_{\alpha_s} = |\alpha_s|,\quad
    A_{\alpha_s} = \sum_{x_i \in \alpha_s} y_i,\quad
    B_{\alpha_s} = \sum_{x_i \in \alpha_s} y_i^2.
\end{equation}
Each atom is itself a coalition of active tuples. The closed forms Eq.~\ref{eq:avg_predicate}, Eq.~\ref{eq:var_predicate_constant}, and Eq.~\ref{eq:varsamp_predicate_constant} therefore apply with $(m_{\mathcal{P}},A_{\mathcal{P}},B_{\mathcal{P}})$ replaced by $(m_{\alpha_s},A_{\alpha_s},B_{\alpha_s})$. We denote the resulting atom attribution by $\Phi_{\alpha_s}(W)$.

\subsection{Compositional Queries}
For any Boolean combination $\mathcal{Q}$ of $\mathcal{P}_1,\dots,\mathcal{P}_K$, let $\mathcal{S}(\mathcal{Q})=\{s\in\{0,1\}^K : \mathcal{Q}(s)=1\}$, where $\mathcal{Q}(s)$ denotes the truth value of $\mathcal{Q}$ on a tuple with signature $s$. The coalition $C_{\mathcal{Q}}$ is then a disjoint union of atoms, and the attribution decomposes additively:
\begin{equation}
    \Phi_{\mathcal{Q}}(W) = \sum_{x_i \in C_{\mathcal{Q}}} \phi_i(\nu, W) = \sum_{s \in \mathcal{S}(\mathcal{Q})\,\cap\,\mathcal{S}_W} \Phi_{\alpha_s}(W).
    \label{eq:compositional_query}
\end{equation}
Standard cases include
\begin{itemize}
    \item \emph{Single predicate:} $\Phi_{\mathcal{P}_k}=\sum_{s:\,s_k=1}\Phi_{\alpha_s}$.
    \item \emph{Intersection:} $\Phi_{\mathcal{P}_j\wedge\mathcal{P}_k}=\sum_{s:\,s_j=s_k=1}\Phi_{\alpha_s}$.
    \item \emph{Complement:} $\Phi_{\neg\mathcal{P}_k}=\sum_{s:\,s_k=0}\Phi_{\alpha_s}$.
    \item \emph{Set-difference:} $\Phi_{\mathcal{P}_j\wedge\neg\mathcal{P}_k}=\sum_{s:\,s_j=1,\,s_k=0}\Phi_{\alpha_s}$.
    \item \emph{Union (inclusion--exclusion):} $\Phi_{\mathcal{P}_j\vee\mathcal{P}_k}=\Phi_{\mathcal{P}_j}+\Phi_{\mathcal{P}_k}-\Phi_{\mathcal{P}_j\wedge\mathcal{P}_k}$, equivalently $\sum_{s:\,s_j\vee s_k=1}\Phi_{\alpha_s}$.
\end{itemize}
Answering a compositional query costs one closed-form evaluation per active atom satisfying $\mathcal{Q}$, i.e., $\mathcal{O}(|\mathcal{S}(\mathcal{Q})\cap\mathcal{S}_W|)$ time. This cost is at most $\mathcal{O}(\min(2^K,N))$. In practice, it tracks the number of signatures that actually occur (Section~\ref{sec:eval:scaling}). Alternatively, summaries $(m,A,B)$ of the participating atoms can be added first and the closed form evaluated once on the merged summary, by Corollary~\ref{cor:panes}.

\subsection{Restoration of Efficiency}
Because $\{\alpha_s\}_{s\in\mathcal{S}_W}$ partitions $W$,
\begin{equation}
    \sum_{s\in\mathcal{S}_W} \Phi_{\alpha_s}(W) = \nu(W) - \nu(\emptyset).
    \label{eq:atom_efficiency}
\end{equation}
For any partition $\{\mathcal{Q}_1,\dots,\mathcal{Q}_M\}$ of $\{0,1\}^K$ into Boolean combinations, the family of compositional attributions inherits the same partition property: $\sum_m \Phi_{\mathcal{Q}_m}=\nu(W)-\nu(\emptyset)$. In particular, $\Phi_{\mathcal{P}}+\Phi_{\neg\mathcal{P}}=\nu(W)-\nu(\emptyset)$ for any $\mathcal{P}$. Under overlap, $\sum_k \Phi_{\mathcal{P}_k}\neq \nu(W)-\nu(\emptyset)$ is not a defect of the closed forms. It is a consequence of double-counting tuples that satisfy multiple predicates. The atom decomposition makes this phenomenon transparent and corrects it.

\subsection{Maintenance Under Slides}
Atom summaries are updated by the same delta rules as predicate summaries (Eqs.~\ref{eq:update_m}--\ref{eq:update_B}), applied to the atom selected by the tuple's signature. For each $x \in X^{out}_t \cup X^{in}_t$, the engine evaluates the $K$ predicates, forms $\mathrm{sig}(x)$ in $\mathcal{O}(K)$ time, and updates the summaries of the corresponding atom. New signatures are allocated lazily, and atoms whose count drops to zero are reclaimed. Per-tick maintenance cost is therefore $\mathcal{O}(|\Delta_t|\,K)$ across all atoms, and the state is $\mathcal{O}(|\mathcal{S}_W|) \leq \mathcal{O}(\min(2^K, N))$.
In practice, the $K$ predicate tests are compiled into a $K$-bit mask per tuple. Signature evaluation then behaves like any vectorized filter operator rather than $K$ scattered branches.

\subsection{State Growth in Practice}
The $2^K$ upper bound on $|\mathcal{S}_W|$ is rarely tight. Categorical attributes with mutually exclusive values (e.g., \texttt{region='EU'} vs.\ \texttt{region='US'}) allow at most one of the corresponding bits to be set, so impossible signatures never appear. When predicates target disjoint dimensions (region, tier, service), $|\mathcal{S}_W|$ grows with the product of the per-dimension cardinalities rather than with $2^K$. Lazy allocation keeps only the signatures that actually occur. If the exact atom state still grows too large for memory, bounded-memory fallbacks apply. These include sketching or heavy-hitter tracking on signatures, partial materialization of atoms, or maintaining summaries only for explicitly enumerated intersections and answering the remaining compositional queries by inclusion--exclusion over those.

\begin{example}[Compositional Predicate Attribution]
For $K=2$ with $\mathcal{P}_1=(\text{region}=\text{EU})$ and $\mathcal{P}_2=(\text{tier}=\text{premium})$, the four signatures yield atoms $\alpha_{00}$ (US, basic), $\alpha_{01}$ (US, premium), $\alpha_{10}$ (EU, basic), $\alpha_{11}$ (EU, premium). From their summaries,
``EU contribution'' $=\Phi_{\alpha_{10}}+\Phi_{\alpha_{11}}$;
``premium contribution'' $=\Phi_{\alpha_{01}}+\Phi_{\alpha_{11}}$;
``EU and premium'' $=\Phi_{\alpha_{11}}$;
``EU but not premium'' $=\Phi_{\alpha_{10}}$;
``EU or premium'' $=\Phi_{\alpha_{01}}+\Phi_{\alpha_{10}}+\Phi_{\alpha_{11}}$;
all answered exactly, with $\Phi_{\alpha_{00}}+\Phi_{\alpha_{01}}+\Phi_{\alpha_{10}}+\Phi_{\alpha_{11}}=\nu(W)-\nu(\emptyset)$.
\end{example}

\begin{proposition}[Compositional Predicate Attribution]
\label{prop:compositional}
Fix $K$ predicates $\mathcal{P}_1,\dots,\mathcal{P}_K$ and let $\{\alpha_s\}_{s\in\mathcal{S}_W}$ be their atomic refinement on the active window. For any Boolean combination $\mathcal{Q}$ of the predicates, the predicate attribution $\Phi_{\mathcal{Q}}(W)$ is given exactly by Eq.~\ref{eq:compositional_query}, with each atom attribution $\Phi_{\alpha_s}(W)$ computed from $(m_{\alpha_s},A_{\alpha_s},B_{\alpha_s})$ and the global summaries via Eq.~\ref{eq:avg_predicate}, Eq.~\ref{eq:var_predicate_constant}, or Eq.~\ref{eq:varsamp_predicate_constant}. Per-tick maintenance is $\mathcal{O}(|\Delta_t|\,K)$, and the state is $\mathcal{O}(\min(2^K, N))$. The atom decomposition restores Shapley efficiency over the window: $\sum_{s\in\mathcal{S}_W}\Phi_{\alpha_s}(W)=\nu(W)-\nu(\emptyset)$.
\end{proposition}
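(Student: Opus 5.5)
The plan is to reduce everything to two facts: the atoms form a partition of $W$, and a predicate attribution is, by definition (Eq.~\ref{eq:predicate_def}), a sum of tuple-level Shapley values over its coalition. First, observe that $\mathrm{sig}$ is a function on $W$, so the atoms $\alpha_s=\mathrm{sig}^{-1}(s)$ are pairwise disjoint and their union over $\mathcal{S}_W$ is $W$. Next, show that for any Boolean combination $\mathcal{Q}$ of the $\mathcal{P}_k$, the truth value $\mathcal{Q}(x)$ depends only on $\mathrm{sig}(x)$. This follows by structural induction on the formula: the base case $\mathcal{P}_k(x)=s_k$ is immediate, and $\wedge,\vee,\neg$ preserve the property. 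Hence $C_{\mathcal{Q}}=\bigsqcup_{s\in\mathcal{S}(\mathcal{Q})\cap\mathcal{S}_W}\alpha_s$.

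Summing $\phi_i(\nu,W)$ over this disjoint union and regrouping the finite sum by atoms then gives the second equality of Eq.~\ref{eq:compositional_query}, with $\Phi_{\alpha_s}(W)=\sum_{x_i\in\alpha_s}\phi_i(\nu,W)$. Each $\alpha_s$ is a subset of $W$, and the tuple-level closed forms (Eq.~\ref{eq:avg_tuple} and those behind Proposition~\ref{prop:variance}) hold for every tuple of $W$ with window-level quantities $N,A(W),B(W)$. By Lemma~\ref{lem:linearity}, the sum over $\alpha_s$ is therefore the affine expression Eq.~\ref{eq:avg_predicate}, Eq.~\ref{eq:var_predicate_constant}, or Eq.~\ref{eq:varsamp_predicate_constant} evaluated at $(m_{\alpha_s},A_{\alpha_s},B_{\alpha_s})$. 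In other words, Propositions~\ref{prop:avg} and~\ref{prop:variance} apply verbatim with $\mathcal{P}$ replaced by the predicate ``$\mathrm{sig}(x)=s$''. For efficiency, take $\mathcal{Q}\equiv\mathrm{true}$. Then $C_{\mathcal{Q}}=W$, and the Efficiency axiom gives $\sum_{s\in\mathcal{S}_W}\Phi_{\alpha_s}=\sum_{x_i\in W}\phi_i=\nu(W)-\nu(\emptyset)$.

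For the complexity claims, each tuple in $X^{out}_t\cup X^{in}_t$ requires $K$ predicate evaluations to form $\mathrm{sig}(x)$. It then triggers one $\mathcal{O}(1)$ update of Eqs.~\ref{eq:update_m}--\ref{eq:update_B} on a single atom, using a hash map keyed by signature with $\mathcal{O}(1)$ expected (or $\mathcal{O}(K)$ worst-case hashing) access. Global summaries and harmonic numbers are updated in $\mathcal{O}(|\Delta_t|)$, which gives a per-tick cost of $\mathcal{O}(|\Delta_t|K)$. The state is one triple per nonempty atom. Since nonempty atoms are disjoint subsets of $W$ and are indexed by $\{0,1\}^K$, there are at most $\min(2^K,N)$ of them, provided empty atoms are reclaimed when their count hits zero.

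The main obstacle is less mathematical than definitional. One must argue carefully that applying the predicate closed forms to an atom is legitimate: the coefficients must use the full-window $N$ and $A(W),B(W)$, not atom-local ones, which is exactly what Lemma~\ref{lem:linearity} guarantees. One must also make the data-structure assumptions behind the $\mathcal{O}(|\Delta_t|K)$ bound explicit, namely constant-time predicate tests and signature lookup.
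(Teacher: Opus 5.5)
Your proposal is correct and takes the same route as the paper's proof. The atoms partition $W$, so $C_{\mathcal{Q}}$ is a disjoint union of atoms and $\Phi_{\mathcal{Q}}$ regroups into atom attributions, each obtained from the predicate closed forms evaluated at the atom's summaries with the global $N$, $A(W)$, $B(W)$; efficiency then follows from the Shapley Efficiency axiom over that partition, and the cost bound from $K$ predicate tests plus one atom update per delta tuple, with your structural induction and your explicit signature-lookup and empty-atom-reclamation assumptions spelling out steps the paper leaves implicit.
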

\begin{proof}
Atoms partition $W$ by construction, so $C_{\mathcal{Q}}$ is a disjoint union of atoms and $\Phi_{\mathcal{Q}}=\sum_{x_i\in C_{\mathcal{Q}}}\phi_i=\sum_s\Phi_{\alpha_s}$. Each closed form was derived for an arbitrary subset of the window using only that subset's $(m,A,B)$ and the global summaries, and atoms are such subsets. The maintenance cost holds because each slide tuple needs $K$ predicate evaluations to form its signature and one atom-summary update. Efficiency follows from the Efficiency axiom applied to the partition $\{\alpha_s\}$.
\end{proof}

\section{Time-Based Windows}
\label{sec:windows}
The closed forms Eq.~\ref{eq:avg_tuple}, Eq.~\ref{eq:avg_predicate}, Eq.~\ref{eq:var_predicate_constant}, and Eq.~\ref{eq:varsamp_predicate_constant} were derived without using the count-based fixed-cardinality assumption. They depend on the window only through its cardinality $N$ and the additive moments $A(W)$, $B(W)$ (and the corresponding predicate or atom summaries). For time-based windows we therefore evaluate the same formulas with $N$ replaced by the current active cardinality $N_t$. Two practical concerns arise that did not arise in the count-based case: (i) the per-tick delta is multi-tuple, and (ii) the $N$-dependent coefficients must track $N_t$. By the closed forms of Section~\ref{sec:variance}, the latter reduces to maintaining the two running harmonic numbers.

\subsection{Multi-Tuple Deltas}
For a time-based slide $\Delta_t=(X^{out}_t,X^{in}_t)$, the per-tick maintenance cost is $\mathcal{O}(|\Delta_t|)$ for each affected predicate, because all summary updates are additive (Algorithm~\ref{alg:summary-maintenance}). Under stationary arrivals with rate $\lambda$ and tick interval $h$, the expected delta size is $\mathbb{E}[|\Delta_t|]=2\lambda h$ in steady state, because the expected insert and expiration rates match. Thus, for a fixed emit interval, the expected update work is controlled by the number of arrivals and expirations in that interval rather than by the full window cardinality.

\subsection{\texorpdfstring{$N$}{N}-Dependent Coefficient Updates}
The coefficients depend on the window only through $H_{N_t}$ and $H^{(2)}_{N_t}$ (Eqs.~\ref{eq:coeff_closed_1}--\ref{eq:coeff_closed_s2}). Both update incrementally,
\begin{equation}
    H_{N+1}=H_N+\frac{1}{N+1},
    \qquad
    H^{(2)}_{N+1}=H^{(2)}_N+\frac{1}{(N+1)^2},
\end{equation}
When $N_t$ jumps by more than one within a tick, extending the pair by $|\Delta N_t|$ steps costs $\mathcal{O}(|\Delta N_t|) \le \mathcal{O}(|\Delta_t|)$. This cost is already dominated by the summary updates of the same tick. In practice, one keeps the prefix arrays $H_{1},\dots,H_{N^{\mathrm{seen}}}$ and $H^{(2)}_{1},\dots,H^{(2)}_{N^{\mathrm{seen}}}$, extended lazily to the largest cardinality seen. Coefficient maintenance is therefore free in both asymptotic and practical terms, for any pattern of cardinality variation, with no precomputation and no assumptions on $N_{\max}$.

\subsection{Edge Cases and Out-of-Order Arrivals}
\paragraph{Edge cases.}
When $N_t=0$ the window is empty and all attributions are undefined; we emit a sentinel and continue. When $N_t=1$, by Efficiency $\phi_i=\nu(W)$. The AVG closed form (Eq.~\ref{eq:avg_tuple}) reduces correctly because $H_1-1=0$, giving $\phi_i=y_i/1=y_i$. The variance games are zero on singletons by definition. When $N_t$ shrinks across a tick because more tuples expire than arrive, the same updates apply with the corresponding $\delta<0$.

\paragraph{Out-of-order arrivals.}
The analysis above assumes that within a tick all tuple insertions and expirations are determined unambiguously by event-time watermarks. Late arrivals, meaning tuples whose timestamp falls inside an already-emitted window, require either watermark-based reordering of the stream or retroactive correction of the affine attribution form. We treat full out-of-order support as future work. In practice, watermark policies in Flink and similar engines bound the out-of-order delay so that retroactive corrections, if needed, are applied to a small constant number of recent emit ticks.

\paragraph{Retractions and corrections.}
Engines that model corrections with retractions (e.g., an explicit delete or negated update followed by an inserted replacement) pair cleanly with the maintenance story in this paper. Predicate and global summaries are affine in the per-tuple increments in Eqs.~\ref{eq:update_m}--\ref{eq:update_B}. Cancelling an erroneous tuple therefore reapplies the same updates with opposite sign, and inserting the corrected tuple applies them forward again.

\begin{proposition}[Time-Based Maintenance]
\label{prop:timebased}
For a time-based window with stochastic cardinality $N_t$ and per-tick delta $\Delta_t$, the predicate attributions Eq.~\ref{eq:avg_predicate}, Eq.~\ref{eq:var_predicate_constant}, and Eq.~\ref{eq:varsamp_predicate_constant} are exact at every emit tick once $N_t$ is substituted for $N$. The per-tick cost is $\mathcal{O}(|\Delta_t|)$ summary updates; coefficient maintenance is $\mathcal{O}(1)$ amortized via the harmonic-number closed forms Eqs.~\ref{eq:coeff_closed_1}--\ref{eq:coeff_closed_s2}.
\end{proposition}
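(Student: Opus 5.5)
The argument splits into exactness, summary cost, and coefficient cost, and most of it is inherited from earlier results.

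\textbf{Exactness.} The key observation is that the Shapley value of Eq.~\ref{eq:shapley_def} is defined for a fixed finite player set. At an emit tick $t$, the time-based window $W_t$ is simply a finite set of cardinality $N_t$, and the games $\nu_{AVG}$, $\nu_{VAR\_POP}$, $\nu_{VAR\_SAMP}$ on $2^{W_t}$ are exactly the games studied in Section~\ref{sec:closedforms}. So Proposition~\ref{prop:avg} and Proposition~\ref{prop:variance}, applied with $W=W_t$ and $N=N_t$, give exactness directly.

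To make this airtight, I will check that the appendix derivations use only three facts: the player count $N$, the moments $A(W)$, $B(W)$ of the current player set, and uniform sampling without replacement from $W\setminus\{x_i\}$. None of them uses the sliding structure or a fixed $N$ across time. Lemma~\ref{lem:linearity} then transfers tuple-level exactness to the predicate sums.

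The small cardinalities are handled as edge cases:
\begin{itemize}
\item $N_t=0$: the window is empty and a sentinel is emitted.
\item $N_t=1$: handled by the $H_1-1=0$ convention of Proposition~\ref{prop:avg}, and the variance games vanish on singletons.
\item $N_t=2$: covered by the separate enumeration case in Appendix~\ref{app:variance}.
\end{itemize}

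\textbf{Summary maintenance cost.} I will argue by induction over ticks that the maintained tuple $(N_t,A(W_t),B(W_t),m_{\mathcal{P}},A_{\mathcal{P}},B_{\mathcal{P}})$ equals the true summaries of $W_t$. The argument has three steps:
\begin{itemize}
\item $W_{t'}=(W_t\setminus X^{out}_{t,t'})\cup X^{in}_{t,t'}$, with $X^{out}\subseteq W_t$ and $X^{in}$ disjoint from $W_t$. This holds by the interval definitions, under the in-order/watermark assumption stated earlier.
\item Every summary is a sum of per-tuple terms, so additivity (Corollary~\ref{cor:panes}) makes the updates Eqs.~\ref{eq:update_m}--\ref{eq:update_B}, applied tuple-by-tuple, yield the correct new sums.
\item Each tuple costs $\mathcal{O}(1)$ per affected predicate, for $\mathcal{O}(|\Delta_t|)$ in total.
\end{itemize}

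\textbf{Coefficient maintenance.} The coefficients in Eqs.~\ref{eq:coeff_closed_1}--\ref{eq:coeff_closed_s2} are $\mathcal{O}(1)$ arithmetic in $N_t$, $H_{N_t}$ and $H^{(2)}_{N_t}$. Moving from $H_{N_t}$ to $H_{N_{t'}}$ takes $|N_{t'}-N_t|$ incremental steps, either forward additions or, when $N$ shrinks, subtractions of $1/N$ and $1/N^2$. Alternatively, it is a lookup into the lazily extended prefix arrays.

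Since $|N_{t'}-N_t|\le|\Delta_t|$, this cost is absorbed into the summary cost. The stated amortized $\mathcal{O}(1)$ bound is read as amortized per delta tuple, or, with the prefix arrays, as $\mathcal{O}(1)$ per lookup after amortizing each extension step against the arrival that first pushed $N$ to a new maximum.

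\textbf{Main obstacle.} I expect the main obstacle to be stating precisely the sense of ``amortized $\mathcal{O}(1)$''. With the prefix-array variant, each array entry is created once, charged to a distinct arrival event, and every later access is a lookup. I will make this accounting explicit, together with the standing assumption that window membership at each tick is determined unambiguously.
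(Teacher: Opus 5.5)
Your proposal is correct and takes essentially the same route as the paper. The paper's proof observes that the closed forms use only the current window's cardinality and additive moments, so substituting $N_t$ preserves the derivations, and it cites Algorithm~\ref{alg:summary-maintenance} and the incremental harmonic-number updates for the cost bounds; your tick-by-tick induction and explicit amortization accounting spell this out. One small imprecision: $X^{out}_{t,t'}\subseteq W_t$ holds only when the emit interval satisfies $t'-t\le\tau$, but the signed tuple-by-tuple updates are correct regardless, since $\ind_{(t'-\tau,t']}=\ind_{(t-\tau,t]}+\ind_{(t,t']}-\ind_{(t-\tau,t'-\tau]}$ as functions of the timestamp.
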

\begin{proof}
The closed forms use only the cardinality and the additive moments of the current window, so substituting $N_t$ preserves the derivations. The cost bounds follow from Algorithm~\ref{alg:summary-maintenance} applied to multi-tuple deltas and from the incremental harmonic-number updates of Section~\ref{sec:windows}.
\end{proof}

\section{The Boundary of Sufficient-Statistic Attribution}
\label{sec:boundary}

The five aggregates of Section~\ref{sec:closedforms} are instances of one structural phenomenon. This section describes its boundary from both sides. A sufficiency theorem identifies a broad class of games whose predicate attribution is affine in maintained power sums. An impossibility result shows that order statistics fall outside every fixed moment order.

\subsection{Sufficiency: Moment-Polynomial Games}
For a coalition $S \subseteq W$ and $\ell \geq 1$, let $M_\ell(S)=\sum_{x_j \in S} y_j^{\ell}$ denote the $\ell$-th power sum, with $M_\ell(\emptyset)=0$. Note $M_1=A$ and $M_2=B$ in the earlier notation.

\begin{definition}[Moment-polynomial game]
\label{def:mompoly}
Fix integers $d \geq 1$ and $p \geq 0$. A window game $\nu$ is a \emph{moment-polynomial game of order $(d,p)$} if there exist coefficient functions $g_\kappa:\{0,1,\dots,N\}\to\mathbb{R}$, indexed by multi-indices $\kappa=(\kappa_1,\dots,\kappa_d)\in\mathbb{N}^d$ with $|\kappa|=\sum_\ell \kappa_\ell \leq p$, such that for every $S \subseteq W$,
\begin{equation}
    \nu(S) \;=\; \sum_{|\kappa|\leq p} g_\kappa(|S|)\prod_{\ell=1}^{d} M_\ell(S)^{\kappa_\ell}.
    \label{eq:mompoly}
\end{equation}
\end{definition}
All five aggregates of Section~\ref{sec:closedforms} are moment-polynomial: SUM is $M_1(S)$ (order $(1,1)$); COUNT is $g_0(|S|)=|S|$ (order $(1,0)$); AVG is $g(|S|)M_1(S)$ with $g(s)=1/s$ for $s>0$ (order $(1,1)$); population variance is $\frac{1}{|S|}M_2(S)-\frac{1}{|S|^2}M_1(S)^2$ (order $(2,2)$); sample variance is analogous with $1/(|S|-1)$ factors. The class also contains, e.g., the $q$-th central-moment numerator $\sum_{x_j\in S}(y_j-\overline{y}_S)^q$ for any fixed $q$, and sums of fixed polynomials of the values.

\begin{theorem}[Sufficiency]
\label{thm:sufficiency}
Let $\nu$ be a moment-polynomial game of order $(d,p)$ and let $D=dp$. Then:
\begin{enumerate}
    \item[(i)] There exist coefficients $c_0,\dots,c_D$, each a polynomial in the global power sums $M_1(W),\dots,M_D(W)$ whose coefficients depend only on $N$ (and on the $g_\kappa$), such that $\phi_i(\nu,W)=\sum_{\ell=0}^{D} c_\ell\, y_i^{\ell}$ for every $x_i \in W$.
    \item[(ii)] For every predicate $\mathcal{P}$,
    $\Phi_{\mathcal{P}}(W)=\sum_{\ell=0}^{D} c_\ell\, M_{\ell,\mathcal{P}}$,
    where $M_{0,\mathcal{P}}=m_{\mathcal{P}}$ and $M_{\ell,\mathcal{P}}=\sum_{x_i \in C_{\mathcal{P}}} y_i^{\ell}$; that is, $\Phi_{\mathcal{P}}$ is affine in the predicate power-sum vector, generalizing Lemma~\ref{lem:linearity}.
    \item[(iii)] All required summaries are additive over disjoint subsets. Consequently, for fixed $(d,p)$, Corollaries~\ref{cor:perslide} and~\ref{cor:panes} extend verbatim with $\mathcal{O}(D)$ summaries per predicate or atom: per-slide maintenance is $\mathcal{O}(D)$ per affected predicate and queries evaluate in $\mathcal{O}(D)$ given coefficients that depend only on $N$ and the global power sums.
\end{enumerate}
\end{theorem}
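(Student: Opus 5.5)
We use the random-order form of the Shapley value, $\phi_i=\frac1N\sum_{r=0}^{N-1}\mathbb{E}_{|S|=r}[\nu(S\cup\{x_i\})-\nu(S)]$, where $S$ is uniform among $r$-subsets of $W\setminus\{x_i\}$, exactly as in Eq.~\ref{eq:var_exact_target}. By linearity of expectation it suffices to treat a single monomial term $g_\kappa(|S|)\prod_\ell M_\ell(S)^{\kappa_\ell}$, since $|S|$ is deterministic given $r$.

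For the term $\nu(S\cup\{x_i\})$, we expand $M_\ell(S\cup\{x_i\})=M_\ell(S)+y_i^\ell$ by the binomial theorem. This produces a polynomial in $y_i$ whose coefficients are expectations of monomials $\prod_\ell M_\ell(S)^{a_\ell}$ with $|a|\le p$. So both the ``with'' and ``without'' terms reduce to the following key lemma.

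\textbf{Key lemma.} For $S$ a uniform $r$-subset of $W\setminus\{x_i\}$ and a multi-index $a$ with $|a|\le p$, the expectation $\mathbb{E}\prod_\ell M_\ell(S)^{a_\ell}$ is a polynomial in $y_i$ of degree at most $\sum_\ell \ell a_\ell$. Its coefficients are polynomials in $M_1(W),\dots,M_D(W)$, with further coefficients depending only on $N$ and $r$.

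To prove the lemma, write $\prod_\ell M_\ell(S)^{a_\ell}$ as a sum over ordered tuples $(j_1,\dots,j_k)$ of elements of $S$, with $k=|a|$, of $\prod_t y_{j_t}^{e_t}$. Group the tuples by the set partition $\pi$ of $\{1,\dots,k\}$ that records which indices coincide. For a fixed $\pi$ with $b$ blocks, the sum over injective assignments of the blocks to distinct elements of $S$ is a product of power sums over $S$ with the distinctness condition imposed. Its expectation is
\[
\frac{(r)_b}{(n)_b}\sum_{\text{distinct } j_1,\dots,j_b\in W\setminus\{x_i\}}\prod_{q} y_{j_q}^{f_q},
\]
where $n=N-1$, $(\cdot)_b$ denotes the falling factorial, and $f_q$ is the total exponent of block $q$. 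This is the same mechanism that produced Eq.~\ref{eq:sample_sum_second_moment}. By M\"obius inversion on the partition lattice, the distinct-index sum is an integer polynomial in the power sums $M_f(W\setminus\{x_i\})=M_f(W)-y_i^f$. Here the exponents satisfy $f\le\sum_\ell \ell a_\ell\le dp=D$.

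Expanding the factors $M_f(W)-y_i^f$ then gives a polynomial in $y_i$. Each monomial obtained has total $y$-degree equal to $\sum_\ell \ell a_\ell\le D$, so the power of $y_i$ is at most $D$. This proves the lemma. The cases with $r<b$ vanish automatically, since $(r)_b=0$. Small $N$, where $(n)_b=0$, is handled by the same convention used for $N=2$ in Appendix~\ref{app:variance}: only realizable distinct tuples contribute, so one should work directly with the distinct-sum expression rather than dividing by $(n)_b$.

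Summing over $r$ with weight $\frac1N g_\kappa(\cdot)$ and over $\kappa$ then gives $\phi_i=\sum_{\ell=0}^{D}c_\ell y_i^\ell$. Each $c_\ell$ is a polynomial in $M_1(W),\dots,M_D(W)$ whose coefficients depend only on $N$ and the $g_\kappa$, and it does not depend on $i$. This proves (i).

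Part (ii) follows by summing (i) over $C_{\mathcal{P}}$, exactly as in Lemma~\ref{lem:linearity}. Part (iii) follows because every required summary $M_{\ell,\mathcal{P}}$ and $M_\ell(W)$ is a sum of per-tuple terms. The additive updates of Eqs.~\ref{eq:update_m}--\ref{eq:update_B} and the partition argument of Corollary~\ref{cor:panes} therefore carry over with $D+1$ summaries instead of three. Given the $c_\ell$, a query costs $\mathcal{O}(D)$.

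The main obstacle is the key lemma. The two delicate points are tracking the degree bound $D$ through the M\"obius expansion and correctly handling the falling-factorial ratios at small $r$ and $N$. The counting step itself is routine.
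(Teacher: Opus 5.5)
Your proposal is correct and follows the paper's proof essentially step for step: the random-order form of the Shapley value, the expansion $M_\ell(S\cup\{x_i\})=M_\ell(S)+y_i^\ell$, grouping index tuples by set partition with inclusion probability $(r)_b/(n)_b$, and inclusion--exclusion (M\"obius inversion) over the partition lattice to express distinct-index sums as power sums, followed by the substitution $M_f(W\setminus\{x_i\})=M_f(W)-y_i^f$. One small addition beyond the paper: you explicitly handle the degenerate case $b>n$, where both the distinct-index sum and $(r)_b$ vanish so the term is zero and its coefficient can be set to $0$; the paper leaves this implicit.
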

The proof (Appendix~\ref{app:boundary}) rests on two classical facts. First, over a uniformly random $r$-subset drawn without replacement, the probability that $q$ fixed distinct tuples are jointly included is the falling-factorial ratio $(r)_q/(n)_q$. Second, products of power sums expand into sums over distinct-index tuples that are again polynomials in power sums. The degree bound $D=dp$ is not tight for specific games. For the variance games, cancellations reduce the required predicate summaries to $(m_{\mathcal{P}},A_{\mathcal{P}},B_{\mathcal{P}})$ as in Section~\ref{sec:variance}.

\begin{remark}[Multivariate extension]
\label{rem:multivariate}
If tuples carry a value vector $(y_j,z_j,\dots)$ and Eq.~\ref{eq:mompoly} is stated over mixed power sums $M_{a,b}(S)=\sum_{x_j\in S}y_j^a z_j^b$, the proof carries over unchanged. This covers, e.g., the covariance game $\nu_{COV}(S)=\frac{1}{|S|}M_{1,1}(S)-\frac{1}{|S|^2}M_{1,0}(S)M_{0,1}(S)$. Its predicate attribution is therefore affine in the predicate summaries $(m_{\mathcal{P}},\sum y_i,\sum z_i,\sum y_i z_i,\sum y_i^2,\sum z_i^2)$.
\end{remark}

\subsection{Impossibility: Order Statistics}
The MAX game is $\nu_{MAX}(S)=\max_{x_j \in S} y_j$ for $S\neq\emptyset$ and $\nu_{MAX}(\emptyset)=0$.

\begin{proposition}[No finite-moment representation for order statistics]
\label{prop:impossibility}
For every $d \geq 1$ there exist windows $W$, $W'$ with $|W|=|W'|=N$ and $M_\ell(W)=M_\ell(W')$ for all $\ell \leq d$, but $\nu_{MAX}(W)\neq\nu_{MAX}(W')$. Consequently, no attribution rule computed from $(N, M_1(W),\dots,M_d(W))$ and per-predicate power sums up to order $d$ can satisfy the Efficiency axiom for the MAX game: the totals it must reproduce already differ on inputs it cannot distinguish. The same holds for MIN and for any fixed order statistic, hence for quantiles and the median.
\end{proposition}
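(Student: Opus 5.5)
The plan is to construct, for each $d$, two multisets of real values of equal size whose first $d$ power sums coincide but whose maxima differ. I would use the classical Prouhet--Tarry--Escott (Thue--Morse) construction. Split $\{0,1,\dots,2^{d+1}-1\}$ according to the parity of the binary digit sum. Call the two halves $X$ and $X'$; each has $2^d$ elements. Prouhet's theorem gives $\sum_{x\in X}x^\ell=\sum_{x\in X'}x^\ell$ for all $0\le\ell\le d$.

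Prouhet's identity itself follows by induction on $d$ via the shift $t\mapsto t+2^{d}$ and the binomial expansion, so I would cite it rather than re-derive it. The element $2^{d+1}-1$ has digit sum $d+1$, so it lies in exactly one of $X$, $X'$. That forces $\max X\neq\max X'$. Set $N=2^d$ and take $W$ and $W'$ with these value multisets.

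If one insists on distinct tuples, nothing changes, since tuples are distinct players even with equal values. Here the values are already distinct. This gives $|W|=|W'|=N$, $M_\ell(W)=M_\ell(W')$ for $\ell\le d$, and $\nu_{MAX}(W)\ne\nu_{MAX}(W')$.

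For the consequence, suppose a rule outputs $\Phi_{\mathcal{P}}$ as a function of $(N,M_1(W),\dots,M_d(W))$ and the predicate power sums up to order $d$. Apply it to the trivial predicate $\mathcal{P}\equiv\mathrm{true}$, or to any fixed partition, for example the whole-window predicate whose power sums equal the global ones. Efficiency requires the output to equal $\nu_{MAX}(W)-0$. On $W$ and $W'$ the rule's inputs coincide, so its outputs coincide, contradicting $\nu_{MAX}(W)\neq\nu_{MAX}(W')$. If one wants a partition into several predicates, I would choose the predicates to be the same data-independent split of both windows, for example all tuples in one class. Any family of partition predicates whose summaries match on both windows yields the same contradiction after summing.

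For MIN, I would apply the construction to negated values. The power sums up to order $d$ still agree, because odd sums flip sign on both sides and even sums are unchanged. Alternatively, note that $0$ lies in exactly one set, so the minima already differ. For the $k$-th order statistic with fixed $k$, I would pad both windows with many copies of a very large common value $L$, or a very small one, so that the relevant order statistic becomes the max or min of the original sets; padding preserves equality of power sums. For quantiles and the median, I would choose the padding count so that the quantile index lands on the differing extreme element.

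The main obstacle is the median and quantile case. There I must check that after padding the rank that defines the quantile falls on an element where $X$ and $X'$ actually differ, with the rank interpolation convention handled. I would first try padding so that the target rank equals $N_{\text{pad}}+N$ with all padding below, placing the quantile at the original max. If a convention uses averaging of two middle elements, I would try adjusting the padding by one, which lands on the max alone.
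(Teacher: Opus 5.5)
Your argument for MAX, MIN and fixed-rank order statistics is correct, and it takes a genuinely different route from the paper.

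\textbf{The paper's route.} It fixes $N=d+2$ and notes that at any point with pairwise distinct coordinates the Jacobian of the moment map $z\mapsto(M_1(z),\dots,M_d(z))$ has full rank $d$ (Vandermonde). The fiber through that point is therefore locally a smooth $2$-manifold. The paper then moves along a curve in the fiber whose velocity is nonzero in the coordinate $j^\ast$ holding the chosen rank. For small $t$ the ordering is preserved, so that order statistic changes while $M_1,\dots,M_d$ do not.

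\textbf{What your route buys and costs.} Your Prouhet--Tarry--Escott witness replaces the implicit-function argument with explicit integer windows whose correctness is a short induction. It also handles MAX and MIN at once, because each of the global extremes $0$ and $2^{d+1}-1$ lies in exactly one half. The price is size and genericity. Your windows have $N=2^d$ instead of $d+2$, so padding only reaches sizes $N\ge 2^d$. That suffices for the existential statement, but it would not refute a rule restricted to a fixed count-window size between $d+2$ and $2^d$. You also exhibit one special pair, whereas the paper shows that every window with distinct values lies in a continuum of moment-matched windows with different order statistics. The Efficiency consequence via the trivial predicate is identical in both.

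\textbf{A step that fails: the quantile extension.} If all padding goes \emph{below}, the original maximum sits at rank $N_{\mathrm{pad}}+N=M$, the top of the padded window. It can then be the quantile only for levels $q$ with $\lceil qM\rceil=M$. For the median this forces $(M+1)/2=M$, i.e.\ $M=1$.

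The padding has to go above, or on both sides. Appending $N-1$ copies of a common value $L>2^{d+1}$ gives windows of size $2N-1$ whose median is exactly the original maximum. For a general fixed level $q\in(0,1)$, take $a$ small and $b$ large common padding values with $a+N$ equal to the quantile rank in a window of size $M=a+b+N$; this is possible once $M$ is large. Interpolating conventions need no parity adjustment. A quantile of the form $(1-\theta)x_{(a+N)}+\theta L$ with $\theta\in[0,1)$ still differs between the two windows, because their original maxima differ.
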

For $d=2$ an explicit witness is $W=\{1,5,6\}$ and $W'=\{2,3,7\}$: both have $N=3$, $M_1=12$, and $M_2=62$, yet $\max W = 6 \neq 7 = \max W'$. The general construction (Appendix~\ref{app:boundary}) perturbs a window with distinct values along the kernel of the moment map, which has dimension $\geq 2$ whenever $N \geq d+2$. Some moment-preserving perturbation therefore moves any chosen order statistic.

Proposition~\ref{prop:impossibility} explains \emph{why} the operators excluded in Section~\ref{sec:closedforms} are excluded. The problem is not a missing derivation. Bounded-order moments simply do not carry enough information. Between the two regimes lies an open strip. Games that depend on moments, but not polynomially (e.g., standardized skewness, which divides a moment-polynomial numerator by a fractional power of the variance), can use the same summaries but have, to our knowledge, no exact Shapley closed form. Mapping this strip precisely remains future work.

\section{Experimental Evaluation}
\label{sec:eval}

We evaluate five questions. \textbf{RQ1:} Do the closed forms match exact Shapley values computed by exhaustive coalition enumeration? \textbf{RQ2:} How does per-slide attribution cost scale with window size $N$, compared to per-window recomputation, permutation sampling, and enumeration? \textbf{RQ3:} What accuracy does the sampling alternative buy at what cost, for ad-hoc predicates? \textbf{RQ4:} How do maintenance cost and state behave under many predicates and on real data at full stream rate? \textbf{RQ5:} Does the axiomatic weighting change the resulting explanations, relative to the naive lift heuristic and to the predicate-as-player meta-game semantics, on real data?

\subsection{Setup}
\label{sec:eval:setup}
All experiments run single-threaded on an Apple M2 Pro (16\,GB RAM) using a Python 3.12/NumPy 2.4 prototype. No compiled kernels are used, so absolute latencies are conservative. The meaningful quantity is the comparison across methods, which share data structures and predicate-evaluation code. The implementation and all experiment scripts are included with the submission. We compare four methods that answer the same query, namely the predicate attribution of a fixed predicate (selectivity $0.3$) for AVG and VAR\_POP at every slide of a count-based window:
\begin{itemize}
    \item \textbf{Incremental (ours):} incremental summary maintenance (Algorithm~\ref{alg:summary-maintenance}) plus closed-form evaluation; coefficients evaluated in $\mathcal{O}(1)$ from the running harmonic numbers $H_N$ and $H^{(2)}_N$ (constant for fixed count-window size $N$).
    \item \textbf{Per-window scan:} closed-form evaluation with summaries recomputed from scratch by a vectorized $\mathcal{O}(N)$ scan per window; the same $\mathcal{O}(1)$ harmonic-number coefficients. This isolates the value of incremental maintenance given our formulas.
    \item \textbf{MC ($M$):} stateless permutation sampling \cite{castro2009polynomial} with $M$ permutations per window; each permutation is walked once with incrementally maintained $(|S|,A,B)$, i.e., $\mathcal{O}(MN)$ per window, which is the strongest sampling variant for these games. Figure~\ref{fig:latency} reports the case $M=100$.
    \item \textbf{Enumeration ($2^{N}$):} exact enumeration of all $2^{N-1}$ predecessor coalitions per tuple via Eq.~\ref{eq:shapley_def}.
\end{itemize}
MC and Enumeration are generic methods that only see the game through its value function and do not exploit aggregate structure. We include them to show what the stateless alternatives cost, not as competitive baselines. The like-for-like comparison that isolates our contribution is Incremental versus Per-window scan: both use the same closed forms and differ only in how summaries are obtained. Explanation systems such as Scorpion~\cite{wu2013scorpion}, DIFF~\cite{abuzaid2021diff}, and TSExplain~\cite{chen2023tsexplain} answer related but non-axiomatic explanation queries. RQ5 therefore compares against the \emph{semantic} alternatives (the lift heuristic and the meta-game), which target the same quantity as we do. System-level comparisons on shared workloads are left to future work.

\subsection{RQ1: Exactness}
\label{sec:eval:exactness}
We compare closed-form outputs against Enumeration on a grid spanning $N \in \{2,\dots,12\}$, predicate selectivities $\{0,0.25,0.5,0.75,1\}$, and value distributions $\mathcal{N}(0,1)$ and $\mathrm{Uniform}[-100,100]$. This yields $10{,}010$ random windows and $30{,}030$ aggregate comparisons, checking every tuple-level value and the predicate-level attribution. We cap $N$ at 12 because enumeration costs $\mathcal{O}(N\cdot 2^{N-1})$. The closed forms are exact for all $N$ by Propositions~\ref{prop:avg} and~\ref{prop:variance}, so RQ1 validates the implementation. Table~\ref{tab:empirical_validation} reports the maxima. All deviations are at double-precision rounding level. The larger absolute deviations for the variance games under $\mathrm{Uniform}[-100,100]$ simply reflect the $10^{4}$-scale second moments involved. Relative deviations stay below $2\times10^{-12}$. The atomic-refinement Efficiency identity $\sum_{s\in\mathcal{S}_W}\Phi_{\alpha_s}(W)=\nu(W)-\nu(\emptyset)$ (Section~\ref{sec:overlap}), checked with $K=3$ random overlapping predicates on every window with $N\ge3$, holds to a maximum normalized deviation of $1.1\times10^{-12}$.

\begin{table}[t]
\small
\centering
\begin{tabular}{l l r r}
\toprule
\textbf{Aggregate} & \textbf{Distribution} & \textbf{Max Abs Dev} & \textbf{Max Rel Dev} \\
\midrule
AVG & $\mathcal{N}(0,1)$ & $1.9 \times 10^{-15}$ & $2.9 \times 10^{-15}$ \\
AVG & $\mathrm{Uniform}[-100,100]$ & $9.6 \times 10^{-14}$ & $4.2 \times 10^{-15}$ \\
VAR\_POP & $\mathcal{N}(0,1)$ & $3.6 \times 10^{-15}$ & $4.6 \times 10^{-13}$ \\
VAR\_POP & $\mathrm{Uniform}[-100,100]$ & $5.9 \times 10^{-12}$ & $1.8 \times 10^{-12}$ \\
VAR\_SAMP & $\mathcal{N}(0,1)$ & $4.2 \times 10^{-15}$ & $4.6 \times 10^{-13}$ \\
VAR\_SAMP & $\mathrm{Uniform}[-100,100]$ & $1.0 \times 10^{-11}$ & $1.5 \times 10^{-12}$ \\
\bottomrule
\end{tabular}
\caption{RQ1: maximum deviation between the closed forms and exhaustive coalition enumeration over $10{,}010$ random windows ($N\in[2,12]$; tuple- and predicate-level values both checked). All deviations are attributable to double-precision floating-point rounding in either method.}
\label{tab:empirical_validation}
\end{table}

\subsection{RQ2: Attribution Latency vs.\ Window Size}
\label{sec:eval:latency}
Figure~\ref{fig:latency} reports median per-slide latency as $N$ grows from $10^2$ to $10^6$ (stream of standard-normal values; 2{,}000 slides per configuration for Incremental and Per-window scan). Incremental is flat at $1.7$--$1.9\,\mu s$ per slide across four orders of magnitude of $N$, matching Corollary~\ref{cor:perslide}. Per-window scan grows linearly from $4.8\,\mu s$ to $5.9$\,ms. At $N=10^6$, Incremental is already $\approx 3{,}200\times$ faster than recomputing summaries per window \emph{with the same closed forms}. The stateless baselines are orders of magnitude further away. MC ($M=100$) costs $5.6$\,ms per window at $N=10^2$ and $576$\,ms at $N=10^4$ ($\approx 3\times10^{5}\times$ Incremental). Enumeration costs $11.6$\,ms at $N=10$, $0.53$\,s at $N=15$, and $24.1$\,s at $N=20$. Neither is viable at streaming rates even for small windows.

\begin{figure}
\centering
\includegraphics[width=0.7\linewidth]{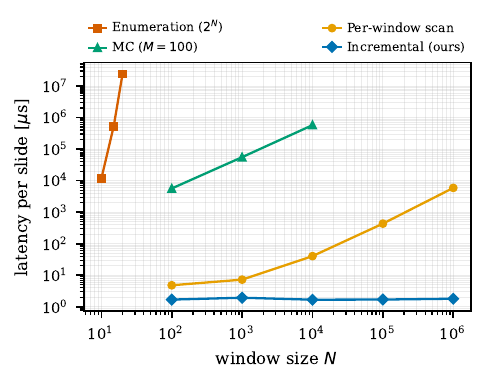}
\caption{RQ2: median per-slide latency for answering the same predicate-attribution query (AVG and VAR\_POP) as a function of window size $N$ (log--log). The closed-form incremental method is constant at $\approx 1.8\,\mu s$; per-window scanning grows linearly; sampling and enumeration are orders of magnitude slower and infeasible at streaming rates.}
\label{fig:latency}
\end{figure}

\subsection{RQ3: The Cost of Approximation}
\label{sec:eval:mc}
Because ad-hoc predicates unknown at maintenance time require either a scan or sampling (Section~\ref{sec:limitations}), we quantify the sampling route. Figure~\ref{fig:mc} shows the accuracy--latency tradeoff of MC as $M$ varies, at $N=1{,}000$ with values from $\mathcal{N}(10,1)$, so that AVG ($\approx 10$) and VAR\_POP ($\approx 1$) both have $\mathcal{O}(1)$ scale. Errors are means over 20 runs, relative to the aggregate value. The error decays as the expected $M^{-1/2}$. Reaching $\approx 0.7\%$ relative error requires $M=3{,}162$ permutations and $3.8$\,s per window, roughly $2\times10^{6}$ times the cost at which Incremental delivers the exact answer for maintained predicates. Sampling is thus a fallback for ad-hoc queries, not a competitor for registered ones.

\begin{figure}
\centering
\includegraphics[width=0.7\linewidth]{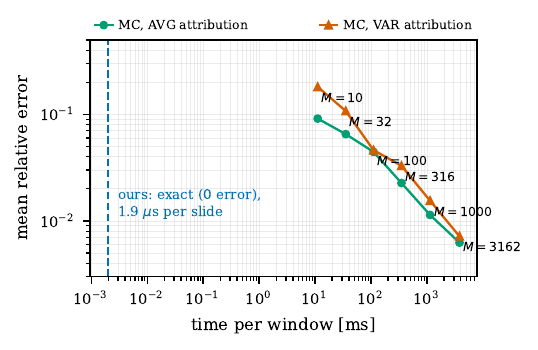}
\caption{RQ3: accuracy--latency tradeoff of permutation sampling for predicate attribution ($N=1{,}000$, values $\sim\mathcal{N}(10,1)$, mean of 20 runs). Errors decay as $M^{-1/2}$; the exact incremental closed form answers in $1.9\,\mu s$.}
\label{fig:mc}
\end{figure}

\subsection{RQ4a: Scaling in the Number of Predicates}
\label{sec:eval:scaling}
Table~\ref{tab:kscaling} scales the atomic-refinement state (Section~\ref{sec:overlap}) to $K=64$ registered predicates over categorical dimensions of cardinality 8 ($N=10^4$, 2{,}000 slides). Per-slide maintenance grows only from $0.88$ to $1.35\,\mu s$. Each slide touches exactly one atom per boundary tuple regardless of $K$ (the $\mathcal{O}(K)$ signature evaluation is a bit-packed vectorized step). The active-atom count $|\mathcal{S}_W|$ stays far below $2^K$ and saturates at its other bound $N$, as predicted in Section~\ref{sec:overlap}. The cost of decomposing the full window into all atom attributions scales with $|\mathcal{S}_W|$, not with $2^K$. The $K=64$ row also marks the honest boundary of the approach. With $|\mathcal{S}_W|\approx N$, the atom state amounts to one summary triple per active tuple, so the ``no tuple materialization'' advantage has been spent on predicate coverage. The framework targets moderate numbers of registered dimensions. Predicate search over unregistered candidates belongs to the ad-hoc regime of RQ3 (see also Section~\ref{sec:limitations}).

\begin{table}[t]
\small
\centering
\begin{tabular}{r r r r r}
\toprule
$K$ & $|\mathcal{S}_W|$ & $\min(2^K,N)$ & slide [$\mu s$] & all-atom query [$\mu s$] \\
\midrule
1 & 2 & 2 & 0.88 & 2.5 \\
4 & 5 & 16 & 0.88 & 2.2 \\
8 & 8 & 256 & 0.88 & 3.2 \\
16 & 64 & $10^4$ & 0.96 & 25.9 \\
32 & 3{,}740 & $10^4$ & 1.00 & 1{,}237 \\
64 & 9{,}994 & $10^4$ & 1.35 & 3{,}323 \\
\bottomrule
\end{tabular}
\caption{RQ4a: atomic-refinement scaling with the number of registered predicates $K$ ($N=10^4$). Maintenance stays near-constant; state and full-decomposition query cost track the \emph{active} signatures $|\mathcal{S}_W|$, not the $2^K$ worst case.}
\label{tab:kscaling}
\end{table}

\subsection{RQ4b: Case Study on NYC Taxi Data}
\label{sec:eval:casestudy}
We stream all 2{,}926{,}220 yellow-taxi trips of January 2024 from the NYC TLC trip records \cite{nyctlc2024} (fares filtered to $(0,500]$ dollars), through a time-based window of 1 hour emitting every 5 minutes, with $K=3$ registered predicates: \emph{airport} (pickup at JFK, LaGuardia, or Newark), \emph{manhattan} (pickup in Manhattan), and \emph{longtrip} (distance $>10$ miles, overlapping both).

\textbf{Throughput and state.} The full month (8{,}928 emit ticks; $N_t$ between 151 and 9{,}199, mean 3{,}931) is processed end-to-end in $3.4$\,s single-threaded. Summary updates take $3.3$\,s, i.e., $\approx 1.8$ million updates per second. Each emit tick then needs $4.8\,\mu s$ for the full atom decomposition under both AVG and VAR\_POP, of which $1.1\,\mu s$ is coefficient maintenance via the harmonic-number closed forms (Eqs.~\ref{eq:coeff_closed_1}--\ref{eq:coeff_closed_2}). There is no coefficient precomputation, no cache, and no $\mathcal{O}(N_t)$ step anywhere in the loop. Of the $2^3=8$ possible signatures, at most 6 atoms are ever active. The two signatures that would require an airport pickup inside Manhattan are impossible, exactly the state collapse anticipated in Section~\ref{sec:overlap}. Across all 8{,}928 real windows, the Efficiency identity holds to a relative deviation of at most $8.6\times10^{-11}$ despite a month of purely additive updates.

\textbf{Explaining a fare spike.} Figure~\ref{fig:casestudy} shows 24 hours starting 2024-01-09 18:00, with the active cardinality $N_t$ shown below so that scale effects (Section~\ref{sec:scale}) are visible next to the attributions. Between 02:00 and 06:00 the windowed average fare rises from a daytime level of $\approx\$15.6$ to a peak of $\$44.29$ (window at 04:25, $N_t=289$). The attribution decomposition explains the spike immediately and quantitatively. The airport group (128 trips) contributes $\Phi^{AVG}_{\mathrm{airport}}=+\$63.44$, \emph{more than the aggregate itself}, while the 136 Manhattan trips contribute $-\$35.73$. Nighttime Manhattan traffic is cheap and pulls the average down. The spike exists only because expensive airport trips dominate the thin overnight window. The share--lift decomposition (Eq.~\ref{eq:share_lift}) keeps this headline honest. With $\bar{y}_{\mathrm{airport}}=\$60.1$ and $w(289)\approx5.26$, the $+\$63.44$ splits into a $\$26.6$ share and a $\$36.8$ amplified lift. Note also that $N_t$ \emph{drops} overnight (from $\approx7{,}000$ to $\approx300$; bottom panel), which lowers $w(N_t)$. The spike in $\Phi$ is therefore driven by composition, not by the $\ln N$ weight. A compositional drill-down (Eq.~\ref{eq:compositional_query}) refines the explanation with no new state: $\Phi_{\mathrm{airport}\wedge\mathrm{longtrip}}=\$62.18$ versus $\Phi_{\mathrm{airport}\wedge\neg\mathrm{longtrip}}=\$1.26$. Long-haul airport runs, not short hops, carry the spike. The variance panel tells the complementary story during the spike. The non-airport, non-Manhattan group is the largest dispersion driver ($+2{,}182\,\$^2$). Manhattan trips are strongly \emph{stabilizing} ($-1{,}144\,\$^2$), illustrating the signed semantics of Section~\ref{sec:closedforms}. Airport trips contribute $+784\,\$^2$. All quantities are exact Shapley attributions computed from the maintained summaries at emit time, with no access to individual tuples.

\begin{figure}
\centering
\includegraphics[width=0.7\linewidth]{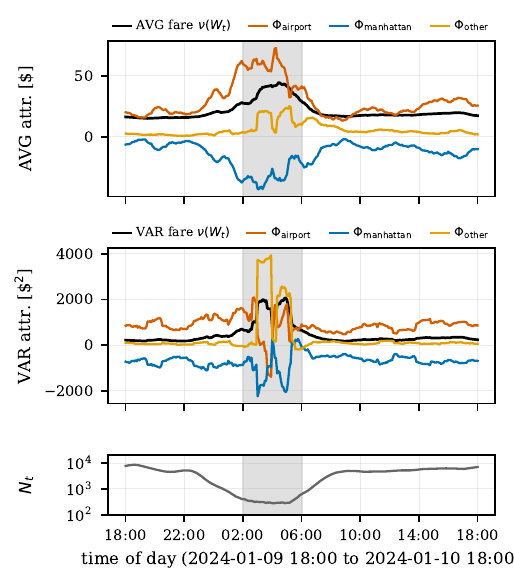}
\caption{RQ4b: 24 hours of NYC yellow-taxi fares (1\,h window, 5\,min emit). Top: windowed AVG fare and its exact Shapley decomposition into airport, Manhattan, and other pickups; the 02:00--06:00 spike (shaded) is driven by airport traffic, whose attribution exceeds the aggregate while Manhattan's is negative. Middle: the same decomposition for the VAR\_POP game; Manhattan trips stabilize the nighttime variance while dispersed outer-borough trips drive it. Bottom: active cardinality $N_t$ (log scale), overlaid because attributions carry a $\ln N_t$-weighted lift component (Section~\ref{sec:scale}).}
\label{fig:casestudy}
\end{figure}

\subsection{RQ5: Does the Axiomatic Weighting Matter?}
\label{sec:eval:lift}
Section~\ref{sec:scale} shows that AVG attribution combines share and lift with a $\ln N$ weight. Does the Shapley value ever \emph{disagree} with simply ranking groups by lift? We compare the two on all 8{,}928 real windows, for the three pickup groups (Figure~\ref{fig:lift}, top). They are highly correlated (Pearson $r=0.962$) and agree on the complete ranking in $90.8\%$ of windows, so the axioms mostly confirm the practitioner's heuristic, as one should hope. The disagreements, however, are not noise. In $522$ windows ($5.8\%$) the \emph{top-ranked} group differs, mostly in mid-size windows (median $N_t=878$ among disagreements). The cause is exactly the share term of Eq.~\ref{eq:share_lift}. For example, at 04:50 on 2024-01-02 ($N_t=239$, $\nu=\$34.73$), the lift heuristic names \emph{airport} the top driver, since it is the only group with positive lift ($\ell=+\$1.48$), while Shapley names \emph{manhattan} ($\Phi=\$21.52$ of the $\$34.73$ total). Here 186 of the 239 trips are Manhattan pickups, and they carry most of the aggregate's value despite below-average fares. The two answers address different questions. Lift asks ``who pushed the average above baseline?''; Shapley asks ``how does the current value distribute over groups?''. Only the latter satisfies Efficiency: over any partition the lifts sum to zero, so the heuristic cannot account for the value itself. Figure~\ref{fig:lift} (bottom) completes the scale story of Section~\ref{sec:scale}. At the fixed peak-window composition, $\Phi^{AVG}_{\mathrm{airport}}$ grows from $\$58.8$ to $\$88.1$ as $N$ ranges over the observed cardinalities, while the share component stays constant. This is why Figure~\ref{fig:casestudy} shows $N_t$ alongside the attribution curves.

\textbf{Sum-of-members vs.\ predicate-as-player.} We also computed, at every emit tick, the \emph{meta-game} semantics discussed in Section~\ref{sec:limitations}: the three groups act as three players and $\nu$ is evaluated on unions of their coalitions (exact via $3!$ permutations over merged summaries; $3.4\,\mu s$ per tick). The two semantics correlate ($r=0.831$) and agree on the top contributor in $93.2\%$ of windows, but their magnitudes differ substantially. The mean absolute deviation is $\$8.04$ per group ($39\%$ of the aggregate value on average) and the maximum is $\$62.26$. At the spike peak, the meta-game assigns airport $+\$25.97$ instead of $+\$63.44$ and Manhattan $-\$6.34$ instead of $-\$35.73$. With only three coarse players, each group's marginal contribution is evaluated against far larger coalitions, which compresses the attributions. Which semantics is preferable depends on the application. The sum-of-members semantics refines consistently under drill-down (Eq.~\ref{eq:compositional_query}) and is the one with constant-time maintenance. The meta-game depends on the chosen predicate granularity and costs $\mathcal{O}(2^K)$ per query for $K$ groups, exponential in general. Either way, the measured gap shows the choice is consequential, not cosmetic.

\begin{figure}
\centering
\includegraphics[width=0.7\linewidth]{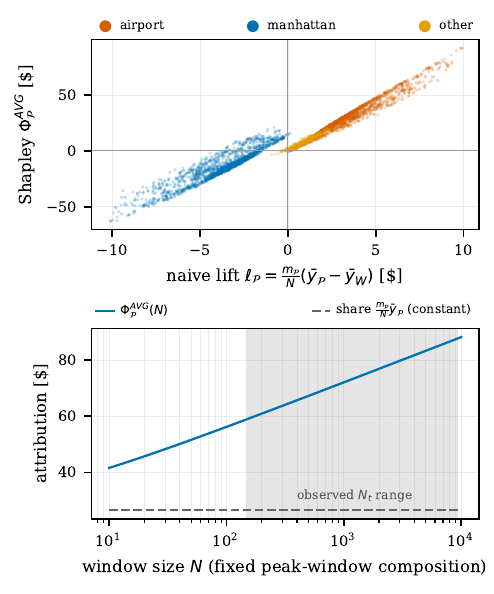}
\caption{RQ5. Top: exact Shapley attribution vs.\ naive lift for the three pickup groups over all 8{,}928 real windows (2{,}500 points sampled per group). The two agree broadly ($r=0.962$) but name a different top contributor in $5.8\%$ of windows. Bottom: $\Phi^{AVG}$ at fixed (peak-window) composition as a function of $N$: the share component is constant while the harmonically weighted lift grows like $\ln N$ across the observed cardinality range (shaded).}
\label{fig:lift}
\end{figure}

\section{Limitations and Future Work}
\label{sec:limitations}

The closed forms above give exact sufficient-statistics attribution for the aggregate games treated in this paper. Keeping the scope this narrow avoids unproved approximation assumptions. It also makes the limits of the results explicit.

\textbf{1. Aggregate scope.}
Our exact-maintenance results cover SUM, COUNT, AVG, both variance games, and, by Theorem~\ref{thm:sufficiency}, any moment-polynomial aggregate (including covariance and fixed central moments; Remark~\ref{rem:multivariate}). They do not cover quantiles, MIN/MAX (impossible by Proposition~\ref{prop:impossibility}), joins, or arbitrary user-defined functions. The territory between the two regimes is not fully mapped. Games that depend on moments non-polynomially (e.g., standardized skewness), ratio-of-moment statistics, DISTINCT-style aggregates, regression aggregates, and model-inference queries need new aggregate-specific arguments.

\textbf{2. Attribution semantics.}
Two semantic questions remain open. First, the closed forms attribute the \emph{current} aggregate value. Attributing the change $\nu(W_{t+1})-\nu(W_t)$ axiomatically would need a Shapley game over a common player set spanning both windows, which is an extra modeling choice. Second, an alternative to our sum-of-members semantics treats the predicates themselves as players in a meta-game over unions of their coalitions. Section~\ref{sec:eval:lift} measures the gap between the two on real data (high rank agreement, but magnitudes differing by $39\%$ of the aggregate on average). The meta-game is computable from the same atom summaries for small $K$ but exponential in general. When it is the better choice, and how to maintain it at scale, is left to future work.

\textbf{3. Registered and ad-hoc predicates.}
The constant-time guarantees assume that predicates (or dimensions) of interest are \emph{registered} in advance so that their summaries are maintained as the stream evolves. Predicates that first appear at query time need scans, indexes, or sampling instead. Lemma~\ref{lem:linearity} makes such approaches well-defined in principle, and Section~\ref{sec:eval:mc} shows how expensive the sampling route is. With very many registered predicates the atom state can also approach one summary per active tuple (Section~\ref{sec:eval:scaling}). A full systems treatment of ad-hoc predicates, indexing, and predicate promotion is left to future work.

\textbf{4. Temporal structure.}
The time-based analysis assumes that watermarks fix all insertions and expirations within each emit tick. Late arrivals need reordering or retroactive correction, which we do not formalize. More broadly, the window is treated as a bag of records. The temporal order of tuples plays no role beyond entry and exit, and extending the axioms to time-ordered Shapley values is a natural next step. Finally, Shapley attribution is not causal explanation. It divides the realized aggregate among tuples or atoms, but it does not identify a causal graph. If EU traffic drives premium-tier latency through a hidden dependency, both atoms can receive large variance attribution without the framework saying which caused which.

\section{Conclusion}
\label{sec:conclusion}
Explaining a streaming aggregate should not cost more than computing it. This paper showed that for SUM, COUNT, AVG, and both variance games, exact predicate-level Shapley attribution collapses to an affine function of three additively maintained summaries per predicate. Every attribution coefficient is a closed form in two running harmonic numbers. Arbitrary Boolean combinations of overlapping predicates are answered exactly from the atomic refinement, which also restores Efficiency. The phenomenon has a boundary on both sides: every moment-polynomial game admits such a closed form, while order statistics provably admit none of any fixed moment order. The share--lift decomposition makes the resulting numbers easy to interpret, including their logarithmic dependence on window size. Empirically, the attribution is exact to floating-point precision and costs a constant $\approx\!2\,\mu s$ per slide, independent of window size, in an unoptimized single-threaded prototype. This is $3{,}200\times$ faster than per-window recomputation with the same formulas at $N=10^6$, with stateless sampling and enumeration further orders of magnitude away. On real data it agrees with the naive lift heuristic on most windows but overturns its top-ranked explanation on $5.8\%$ of them. It also turns a real fare-spike anomaly on 2.9 million taxi trips into a quantitative, axiomatically grounded explanation at stream rate. Because the maintained state consists of ordinary associative aggregates, the method plugs into existing window-aggregation infrastructure. We believe it provides a practical foundation for always-on, explanation-aware stream processing.

\appendix
\section{Proof Details}

\subsection{AVG}
\label{app:avg}
In the permutation view of the Shapley value, each position of $x_i$ is equally likely. If $x_i$ appears first, the marginal contribution is $y_i$. If $x_i$ appears after $r\geq1$ predecessors, the predecessor set $S$ is a uniformly random size-$r$ subset of $W\setminus\{x_i\}$, and since $\mathbb{E}[A(S)]=r(A(W)-y_i)/(N-1)$,
\begin{equation}
    \mathbb{E}\left[\nu_{AVG}(S\cup\{x_i\})-\nu_{AVG}(S)\right]
    =
    \frac{1}{r+1}
    \left(y_i-\frac{A(W)-y_i}{N-1}\right).
\end{equation}
Averaging over $r=0,\dots,N-1$ and using $\sum_{r=1}^{N-1}1/(r+1)=H_N-1$ gives Eq.~\ref{eq:avg_tuple}. Summing over $x_i\in C_{\mathcal{P}}$ yields Eq.~\ref{eq:avg_predicate}. The derivation uses only the cardinality and the additive sum of the current window, so it transfers verbatim to time-based windows once $N$ is replaced by $N_t$. (This computation also appears, for static single-relation queries, in the proof of Proposition~5.2 of \cite{standke2025tractability}.)

\subsection{Variance}
\label{app:variance}

\paragraph{Moment expressions.}
For a fixed player $i$ and a uniformly random size-$r$ subset $S\subseteq W\setminus\{x_i\}$ drawn without replacement, each remaining tuple $x_j \in W\setminus\{x_i\}$ is included with probability $r/n$. Any pair $x_j,x_k$ with $j\neq k$ is jointly included with probability $r(r-1)/(n(n-1))$, where $n=N-1$. Therefore,
\begin{equation}
    \mathbb{E}[A(S)]=\frac{r}{n}T_i,
    \qquad
    \mathbb{E}[B(S)]=\frac{r}{n}U_i.
\end{equation}
For the second moment of $A(S)$,
\begin{equation}
    A(S)^2
    =
    \sum_{j \in S}y_j^2 + \sum_{j\neq k \in S} y_j y_k,
\end{equation}
so
\begin{equation}
\begin{aligned}
    \mathbb{E}[A(S)^2]
    &=
    \frac{r}{n}U_i
    +
    \frac{r(r-1)}{n(n-1)}\sum_{j\neq k}y_j y_k \\
    &=
    \frac{r}{n}U_i+
    \frac{r(r-1)}{n(n-1)}(T_i^2-U_i),
\end{aligned}
\end{equation}
where the last step uses $\sum_{j\neq k} y_j y_k = (\sum_j y_j)^2 - \sum_j y_j^2 = T_i^2 - U_i$ over $W\setminus\{x_i\}$.

\paragraph{$N=2$ case.}
When $N=2$, only the orderings $(x_i,x_j)$ and $(x_j,x_i)$ are possible, with $j$ the unique other tuple. Both predecessor sets give zero population variance (one is empty, the other is a singleton). The marginal contribution of $x_i$ averaged over the two positions is therefore $\tfrac{1}{2}\nu_{VAR\_POP}(\{x_i,x_j\})$. This matches Eq.~\ref{eq:var_tuple} evaluated at $N=2$ using the convention that the $r(r-1)/(n(n-1))$ term vanishes when $n-1=0$.

\paragraph{Population variance Shapley value.}
We expand the marginal contribution step-by-step:
\begin{equation}
\begin{aligned}
& \nu_{VAR\_POP}(S\cup\{x_i\}) - \nu_{VAR\_POP}(S) \\
&\quad= \frac{B(S)+y_i^2}{r+1}
  - \frac{A(S)^2 + 2y_i A(S) + y_i^2}{(r+1)^2} \\
&\qquad - \frac{B(S)}{r} + \frac{A(S)^2}{r^2} \\
&\quad= B(S)\Bigl(\tfrac{1}{r+1} - \tfrac{1}{r}\Bigr)
  + A(S)^2\Bigl(\tfrac{1}{r^2} - \tfrac{1}{(r+1)^2}\Bigr) \\
&\qquad - \frac{2y_i A(S)}{(r+1)^2}
  + y_i^2\Bigl(\tfrac{1}{r+1} - \tfrac{1}{(r+1)^2}\Bigr) \\
&\quad= -\frac{B(S)}{r(r+1)}
  + A(S)^2\Bigl(\tfrac{1}{r^2} - \tfrac{1}{(r+1)^2}\Bigr) \\
&\qquad - \frac{2y_i A(S)}{(r+1)^2}
  + \frac{r y_i^2}{(r+1)^2}.
\end{aligned}
\end{equation}
Taking the expectation over random predecessor coalitions $S$ of size $r$, we substitute $\mathbb{E}[B(S)] = \frac{r}{n}U_i$, $\mathbb{E}[A(S)] = \frac{r}{n}T_i$, and $\mathbb{E}[A(S)^2] = E_i^{(2)}(r)$, and simplify using $\frac{1}{r+1} - \frac{1}{(r+1)^2} = \frac{r}{(r+1)^2}$. Substituting these into the four terms above yields the expected marginal contribution $M_i(r)$:
\begin{equation}
\begin{aligned}
M_i(r)
&=
-\frac{U_i}{n(r+1)}
+
\Bigl(\frac{1}{r^2}-\frac{1}{(r+1)^2}\Bigr)E_i^{(2)}(r)\\
&\quad
-\frac{2ry_iT_i}{n(r+1)^2}
+\frac{r y_i^2}{(r+1)^2}.
\end{aligned}
\end{equation}
This matches Eq.~\ref{eq:var_delta}. Summing over $r$ and over $x_i\in C_{\mathcal{P}}$ uses
\begin{equation}
\begin{aligned}
\sum_{i\in C_{\mathcal{P}}} U_i &= U_{\mathcal{P}},\\
\sum_{i\in C_{\mathcal{P}}} y_i T_i &= YT_{\mathcal{P}},\\
\sum_{i\in C_{\mathcal{P}}} (T_i^2-U_i) &= Q_{\mathcal{P}},\\
\sum_{i\in C_{\mathcal{P}}} y_i^2 &= B_{\mathcal{P}}.
\end{aligned}
\end{equation}
where $U_{\mathcal{P}}$, $YT_{\mathcal{P}}$, and $Q_{\mathcal{P}}$ are defined in Eq.~\ref{eq:UP}--Eq.~\ref{eq:QP}. Substituting gives the predicate-level sum form
\begin{equation}
    \Phi_{\mathcal{P}}^{VAR}(W)
    =
    \frac{1}{N}
    \sum_{r=1}^{N-1}
    D_{\mathcal{P}}(r),
    \label{eq:var_predicate}
\end{equation}
with
\begin{equation}
\begin{aligned}
    D_{\mathcal{P}}(r)
    &=
    -\frac{U_{\mathcal{P}}}{n(r+1)}
    +
    \Bigl(\tfrac{1}{r^2}-\tfrac{1}{(r+1)^2}\Bigr)
    \Bigl[
        \tfrac{r}{n}U_{\mathcal{P}}
        +
        \tfrac{r(r-1)}{n(n-1)}Q_{\mathcal{P}}
    \Bigr] \\
    &\quad
    -
    \frac{2r}{n(r+1)^2}YT_{\mathcal{P}}
    +
    \frac{r}{(r+1)^2}B_{\mathcal{P}}.
\end{aligned}
    \label{eq:var_predicate_delta}
\end{equation}
Collecting the $r$-dependent factors yields Eq.~\ref{eq:var_predicate_constant} with
\begin{equation}
\begin{aligned}
    C_U &= \sum_{r=1}^{N-1}
    \Biggl[
        -\frac{1}{n(r+1)}
        +
        \Bigl(\tfrac{1}{r^2}-\tfrac{1}{(r+1)^2}\Bigr)\frac{r}{n}
    \Biggr],\\
    C_Q &= \sum_{r=1}^{N-1}
    \Bigl(\tfrac{1}{r^2}-\tfrac{1}{(r+1)^2}\Bigr)
    \frac{r(r-1)}{n(n-1)},\\
    C_{YT} &= \sum_{r=1}^{N-1}
    -\frac{2r}{n(r+1)^2},\\
    C_B &= \sum_{r=1}^{N-1}
    \frac{r}{(r+1)^2}.
\end{aligned}
\end{equation}
Partial fractions reduce each sum to harmonic numbers. For example, $\frac{r}{(r+1)^2}=\frac{1}{r+1}-\frac{1}{(r+1)^2}$, so $C_B$ telescopes to $(H_N-1)-(H^{(2)}_N-1)=G_N$. The same treatment of the remaining sums gives the closed forms Eq.~\ref{eq:coeff_closed_1}--Eq.~\ref{eq:coeff_closed_2}.

\paragraph{Sample variance.}
For sample variance and $r\geq2$,
\begin{equation}
\begin{aligned}
M_i^{SAMP}(r)
&=
\nu_{VAR\_SAMP}(S\cup\{x_i\})-\nu_{VAR\_SAMP}(S)\\
&=
\frac{B(S)+y_i^2-(A(S)+y_i)^2/(r+1)}{r}\\
&\quad-
\frac{B(S)-A(S)^2/r}{r-1}.
\end{aligned}
\end{equation}
Taking expectations and again grouping terms by their dependence on $y_i$ gives, for $r\ge 2$,
\begin{equation}
\begin{aligned}
    D_i^{SAMP}(r)
    &=
    -\frac{U_i}{n(r-1)}
    +
    \frac{2E_i^{(2)}(r)}{r(r-1)(r+1)}
    -
    \frac{2y_iT_i}{n(r+1)}
    +
    \frac{y_i^2}{r+1}.
\end{aligned}
    \label{eq:samp_rge2}
\end{equation}
The case $r=1$ is handled separately because $\nu_{VAR\_SAMP}(S)=0$ by definition for $|S|\le 1$. The expected marginal contribution of adding $x_i$ to a one-tuple predecessor set is
\begin{equation}
    D_i^{SAMP}(1)=
    \frac{1}{2}
    \left(
        \frac{U_i}{n}
        -
        \frac{2y_iT_i}{n}
        +
        y_i^2
    \right).
    \label{eq:samp_r1}
\end{equation}
The exact tuple-level Shapley value is
\begin{equation}
    \phi_i^{VAR\_SAMP}(W)=
    \frac{1}{N}
    \left[
        D_i^{SAMP}(1)
        +
        \sum_{r=2}^{N-1}D_i^{SAMP}(r)
    \right].
\end{equation}
Summing over tuple positions and over $x_i\in C_{\mathcal{P}}$ using the same predicate-summary identities as above gives Eq.~\ref{eq:varsamp_predicate_constant} with
\begin{equation}
\begin{aligned}
    C_U^{S}
    &=
    \frac{1}{2n}
    +
    \sum_{r=2}^{N-1}
    \Biggl[
        -\frac{1}{n(r-1)}
        +
        \frac{2}{n(r-1)(r+1)}
    \Biggr],\\
    C_Q^{S}
    &=
    \sum_{r=2}^{N-1}
    \frac{2}{(r+1)n(n-1)},\\
    C_{YT}^{S}
    &=
    -\frac{1}{n}
    +
    \sum_{r=2}^{N-1}
    -\frac{2}{n(r+1)},\\
    C_B^{S}
    &=
    \frac{1}{2}
    +
    \sum_{r=2}^{N-1}
    \frac{1}{r+1}.
\end{aligned}
\end{equation}
The identity $\frac{2}{(r-1)(r+1)}=\frac{1}{r-1}-\frac{1}{r+1}$ cancels the leading term of $C_U^S$ outright. The remaining sums telescope to the closed forms Eq.~\ref{eq:coeff_closed_s1}--Eq.~\ref{eq:coeff_closed_s2}.

\paragraph{$N=2$ sanity check.}
With $N=2$, only $D_i^{SAMP}(1)$ contributes. For $W = \{y_i, y_j\}$ with $n=1$, $U_i = y_j^2$ and $T_i = y_j$, giving $D_i^{SAMP}(1) = \frac{1}{2}(y_j^2 - 2y_iy_j + y_i^2) = (y_i-y_j)^2/2$, so $\phi_i = (y_i-y_j)^2/4$. By symmetry, $\phi_j = (y_i-y_j)^2/4$. The two values sum to $(y_i-y_j)^2/2 = \nu_{VAR\_SAMP}(\{y_i,y_j\})$, verifying Efficiency.

\subsection{Sufficiency and Impossibility}
\label{app:boundary}

\paragraph{Proof of Theorem~\ref{thm:sufficiency}.}
Fix $x_i \in W$ and let $V=W\setminus\{x_i\}$, $n=N-1$. In the permutation view, conditioned on $x_i$ arriving after $r$ predecessors, the predecessor set $S$ is a uniformly random $r$-subset of $V$, and
\begin{equation}
    \phi_i(\nu,W)=\frac{1}{N}\sum_{r=0}^{N-1}\mathbb{E}_{|S|=r}\left[\nu(S\cup\{x_i\})-\nu(S)\right].
\end{equation}
By Eq.~\ref{eq:mompoly} and $M_\ell(S\cup\{x_i\})=M_\ell(S)+y_i^{\ell}$, both $\nu(S)$ and $\nu(S\cup\{x_i\})$ are polynomials in the variables $\{M_\ell(S)\}_{\ell\le d}$ and $y_i$, of total weighted degree at most $D=dp$ (each factor $M_\ell$ or $y_i^{\ell}$ carrying weight $\ell$, each monomial in Eq.~\ref{eq:mompoly} having weight at most $dp$), with coefficients $g_\kappa(r)$ or $g_\kappa(r+1)$ depending only on $r$. It therefore suffices to show that for any exponents $q_1,\dots,q_d$,
\begin{equation}
    \mathbb{E}_{|S|=r}\Big[\textstyle\prod_{\ell=1}^{d}M_\ell(S)^{q_\ell}\Big]
\end{equation}
is a polynomial in the power sums $M_1(V),\dots,M_{D}(V)$ with coefficients depending only on $r$ and $n$. Expand the product into a sum over tuples of indices $(j_1,\dots,j_m)\in V^m$, $m=\sum_\ell q_\ell$, of monomials $y_{j_1}^{a_1}\cdots y_{j_m}^{a_m}$ with $\sum a_t \le D$, each multiplied by $\ind[j_1,\dots,j_m \in S]$. Group the tuples by the set partition induced by equality of indices. A tuple whose distinct indices are $k_1,\dots,k_q$ (with merged exponents $b_1,\dots,b_q$, $\sum b_u \le D$) contributes $y_{k_1}^{b_1}\cdots y_{k_q}^{b_q}\,\ind[k_1,\dots,k_q\in S]$. Under sampling without replacement,
\begin{equation}
\begin{aligned}
    \Pr[k_1,\dots,k_q \in S]
    &=\frac{(r)_q}{(n)_q}\\
    &=\frac{r(r-1)\cdots(r-q+1)}{n(n-1)\cdots(n-q+1)},
\end{aligned}
\end{equation}
a constant depending only on $(r,n,q)$. The remaining sum $\sum y_{k_1}^{b_1}\cdots y_{k_q}^{b_q}$ over \emph{distinct} $k_1,\dots,k_q \in V$ is, by inclusion--exclusion over the partition lattice (the standard expansion of augmented monomial symmetric functions in terms of power sums), a polynomial with integer coefficients in $M_{b}(V)$ for $b \le D$. Finally, $M_b(V)=M_b(W)-y_i^{b}$, so every term is a polynomial in $y_i$ (degree $\le D$) and the global power sums $M_1(W),\dots,M_D(W)$. Averaging over $r$ with weights $1/N$ and coefficients $g_\kappa(\cdot)$ preserves this structure, proving (i). Part (ii) follows by summing (i) over $x_i \in C_{\mathcal{P}}$, since the coefficients $c_\ell$ do not depend on $i$. Part (iii) is immediate because power sums are sums of per-tuple quantities. \hfill$\square$

\paragraph{Proof of Proposition~\ref{prop:impossibility}.}
Fix $d$ and let $N=d+2$. Consider the moment map $\mu:\mathbb{R}^{N}\to\mathbb{R}^{d}$, $\mu(z)=(M_1(z),\dots,M_d(z))$ with $M_\ell(z)=\sum_j z_j^{\ell}$. Its Jacobian has rows $\big(\ell z_1^{\ell-1},\dots,\ell z_N^{\ell-1}\big)_{\ell=1}^{d}$. At any point $z$ with pairwise distinct coordinates it has full rank $d$ (a generalized Vandermonde submatrix on any $d$ distinct coordinates is nonsingular), so $\ker J_\mu(z)$ has dimension $N-d=2$. Fix a target index $j^\ast$ (for MAX, the argmax; for a general order statistic, the coordinate holding that rank). The subspace $\{v \in \ker J_\mu(z): v_{j^\ast}=0\}$ has dimension at least $1$ less, i.e., at most $1 < 2$, so some $v \in \ker J_\mu(z)$ has $v_{j^\ast}\neq 0$. Since $\mu$ is a submersion at $z$, the fiber $\mu^{-1}(\mu(z))$ is locally a smooth $2$-manifold containing a curve $\gamma$ with $\gamma(0)=z$ and $\gamma'(0)=v$. For sufficiently small $t$, the coordinates of $\gamma(t)$ remain pairwise distinct and retain their ordering. The value of the chosen order statistic therefore equals $\gamma_{j^\ast}(t)$, which is non-constant in $t$ because $\gamma_{j^\ast}'(0)=v_{j^\ast}\neq0$. Taking $W=\{\gamma_j(0)\}_j$ and $W'=\{\gamma_j(t)\}_j$ for small $t\neq0$ yields two windows with identical $N$ and moments $M_1,\dots,M_d$ but different order-statistic values. If an attribution rule for the corresponding game were computed from $(N,M_1,\dots,M_d)$ and predicate moments alone, its total over any partition would agree on $W$ and $W'$ (take the trivial single-predicate partition, whose predicate moments coincide with the global ones). This contradicts Efficiency, which forces the totals $\nu(W)\neq\nu(W')$. Windows of any larger size $N>d+2$ follow by padding both windows with the same values. \hfill$\square$

\bibliographystyle{unsrtnat}
\bibliography{references}

\end{document}